\newtheorem{theorem}{Theorem}[section]
\newtheorem{proposition}[theorem]{Proposition}
\newtheorem{corollary}[theorem]{Corollary}
\theoremstyle{definition}
\newtheorem{definition}[theorem]{Definition}
\theoremstyle{remark}
\newenvironment{remark*}{\noindent \textbf{Remark.}}{}
\renewcommand{\subsubsection}{\@startsection{subsubsection}{3}%
  \z@{.5\linespacing\@plus.7\linespacing}{-.5em}%
  {\normalfont\bfseries}}
\title[Interior--Boundary Assortativity Profiles]{Interior--Boundary Assortativity Profiles on Networks and Applications to SIS Epidemic Dynamics}
\author{Moses~Boudourides}
\address{School of Professional Studies, Northwestern University}
\email{Moses.Boudourides@northwestern.edu}
\date{}
\subjclass[2020]{05C82, 37N25, 05C50}
\keywords{
directed graphs,
assortativity,
interior--boundary structure,
network partitions,
SIS epidemic dynamics,
interface effects
}
\begin{document}

\begin{abstract}
We introduce \emph{interior--boundary assortativity profiles} as a structural refinement of Newman’s assortativity coefficient and show that they arise naturally from \emph{epidemic dynamics} on networks.
Given a fixed partition of the node set, edges are stratified according to whether their endpoints are interior or boundary nodes relative to the partition, yielding type-restricted assortativity components.
We prove an exact decomposition theorem showing how classical scalar assortativity collapses heterogeneous interior--boundary interactions into a single number.

We then study a SIS epidemic model and consider equilibrium infection probabilities as node attributes.
Under mild connectivity and positivity assumptions, we show that \emph{boundary dominance}---a dynamical concentration of infection mass on interface nodes---implies a strictly negative boundary-to-interior assortativity component.
This establishes a rigorous link between directed conductance, equilibrium flow geometry, and the sign structure of assortative mixing induced by the dynamics.

Our results demonstrate that assortativity profiles encode dynamical information invisible to scalar summaries and provide a mathematically grounded bridge between network partition geometry and nonlinear dynamics on graphs.
\end{abstract}

\maketitle

\section{Introduction}

Assortative mixing is a fundamental concept in network theory, quantifying the tendency of nodes with similar attributes to connect to one another.
Since the publication of his seminal work, Newman’s assortativity coefficient has been widely used to characterize structural, functional, and social properties of networks.
However, a single scalar coefficient necessarily aggregates heterogeneous interactions and may obscure fine-scale structure, particularly in networks with pronounced interfaces between constituent groups.

At the same time, many node attributes of interest are not exogenous labels, but arise endogenously from dynamical processes on the network.
Epidemic spreading, opinion formation, synchronization, and flow-based processes all generate node-level quantities whose spatial distribution reflects both the network topology and the underlying dynamics.
Understanding how such attributes mix across network interfaces is therefore a question at the intersection of network structure and dynamical systems.

In this work we focus on networks equipped with a fixed partition and introduce a systematic refinement of assortativity based on \emph{interior--boundary stratification}.
Nodes are classified as interior or boundary depending on whether their incident edges remain within their group or cross group boundaries.
This induces a canonical decomposition of (directed) edges into interior--interior, interior--boundary, boundary--interior, and boundary--boundary types, and leads naturally to a vector-valued assortativity profile.

Our first contribution is a general \emph{decomposition theorem} showing that scalar assortativity is a weighted combination of type-restricted assortativities together with an explicit between-type mean-shift term.
This result makes precise how one-number assortativity collapses heterogeneous interaction regimes.

We then turn to dynamics.
We consider a (directed) SIS epidemic model and study the equilibrium infection probabilities as node attributes.
In networks with sparse inter-group connectivity, epidemic flow is funneled through boundary nodes, producing a form of \emph{boundary dominance}.
We prove that this dynamical mechanism enforces a strictly negative boundary-to-interior assortativity component, even when global assortativity may be small or vanish entirely.

By deriving assortativity properties directly from a nonlinear dynamical system, we shift the interpretation of assortativity profiles from descriptive statistics to structural observables of network dynamics.
This perspective aligns naturally with the theory of dynamical systems on networks and provides a rigorous framework for linking partition geometry, flow bottlenecks, and emergent node attributes.

Methodologically, our analysis follows a three-step pipeline.
First, spectral separation and (directed) conductance quantify interface bottlenecks between partition blocks.
Second, these bottlenecks shape the endemic equilibrium of the (directed) SIS dynamics, inducing boundary amplification of infection probabilities.
Third, the interior--boundary assortativity framework translates this dynamical dominance into a rigorous sign prediction for the boundary-to-interior assortativity component. In this way we establish a principled implication chain:
\[
\begin{array}{c}
\text{spectral separation} \\[0.6ex]
\Big\Downarrow \\[0.6ex]
\text{dynamical boundary dominance} \\[0.6ex]
\Big\Downarrow \\[0.6ex]
\text{signed assortativity profiles}
\end{array}
\]

From the perspective of dynamical systems, assortativity profiles act as coarse observables encoding how nonlinear dynamics interact with partition geometry. The results below show that the sign structure of these profiles can be derived analytically from flow bottlenecks and equilibrium structure, placing assortative mixing within the scope of rigorous network dynamical systems theory.

The remainder of the paper is organized as follows.
Section~2 introduces graph partitions and nodal attributes and fixes the notation used throughout.
Section~3 reviews Newman's assortativity coefficient for scalar and categorical attributes.
Section~4 introduces interior--boundary assortativity profiles by stratifying nodes and edges relative to a fixed partition.
Section~5 establishes a profile collapse theorem showing how scalar assortativity aggregates heterogeneous interaction regimes.
Section~6 develops sign mechanisms for interface-sensitive attributes and provides sufficient conditions for sign-definite interior--boundary components.
Section~7 hosts the concept of directed conductance and discusses certain directed spectral tools.
Section~8 introduces an undirected spectral proxy and discusses its relevance for directed networks via projection.
Section~9 studies (directed) SIS epidemic dynamics and proves boundary dominance and sign-definite assortativity profiles induced by the endemic equilibrium.
Section~10 concludes with a discussion of implications and possible extensions.

\section{Graph partitions and nodal attributes}

Throughout, $G=(V,E)$ (unless otherwise stated) is a directed graph (a digraph) with finite node set \(V\), where $E\subseteq V\times V$ and an arc (directed edge) $(u,v)\in E$ points from source $u$ to target $v$. In addition, we denote the adjacency matrix by $A=(A_{ij})$, where $A_{ij}\ge 0$ is the weight of the arc $(i,j)$ (and $A_{ij}\in\{0,1\}$ if unweighted).
The corresponding adjustments for the undirected case are addressed in a separate subsection.

A (\(k\)--)\emph{partition} of \(V\) is a family \(\mathcal P=\{V_1,\dots,V_k\}\) of nonempty, pairwise disjoint subsets whose union is \(V\),
equivalently \(V=\bigsqcup_{j=1}^k V_j\), where \(\bigsqcup\) denotes \emph{disjoint union}
(i.e.\ \(V=\bigcup_{j=1}^k V_j\) and \(V_i\cap V_j=\varnothing\) for \(i\neq j\)).
Let \(\mathfrak{P}_k\) be the set of all such (unlabeled) \(k\)-partitions.

A nodal \emph{attribute} (labeling) is a map \(a:V\to\mathbb{R}\), and we denote by \(\mathfrak{A}:=\{a:V\to\mathbb{R}\}\) the set of all
(real-valued) nodal attributes.
Fix \(k\ge 1\) and write \([k]=\{1,2,\dots,k\}\).
The set of \emph{enumerative} (discrete) nodal attributes with \(k\) levels is
\(\mathfrak{A}_k:=\{a:V\to [k]\}\), and \(\mathfrak{A}_k^{\mathrm{surj}}:=\{a\in\mathfrak{A}_k:\ a \text{ is surjective}\}\).

Any surjective enumerative attribute \(a\in\mathfrak{A}_k^{\mathrm{surj}}\) induces a partition via its fibers:
\(\Phi(a):=\{a^{-1}(1),\dots,a^{-1}(k)\}\in\mathfrak{P}_k\).
Conversely, given \(\mathcal P=\{V_1,\dots,V_k\}\in\mathfrak{P}_k\), choosing an ordering of its blocks defines an enumerative attribute
\(a_{\mathcal P}:V\to [k]\) by \(a_{\mathcal P}(v)=j\) iff \(v\in V_j\); a different ordering produces a labeling obtained by permuting labels.

Formally, the symmetric group \(\mathfrak{S}_k\) acts on \(\mathfrak{A}_k^{\mathrm{surj}}\) by relabeling, \((\sigma\cdot a)(v):=\sigma(a(v))\).
Then for \(a,b\in\mathfrak{A}_k^{\mathrm{surj}}\) one has \(\Phi(a)=\Phi(b)\) if and only if \(b=\sigma\cdot a\) for some \(\sigma\in\mathfrak{S}_k\).
Equivalently, \(\Phi\) induces a bijection \(\overline{\Phi}:\mathfrak{A}_k^{\mathrm{surj}}/\mathfrak{S}_k \to \mathfrak{P}_k\).

In particular, specifying a \(k\)-partition of \(V\) is the same as specifying a surjective discrete nodal attribute with values in \([k]\),
\emph{up to relabeling of the categories}: partitions correspond exactly to equivalence classes of \(k\)-labelings under permutations of \([k]\).

\section{Newman's assortativity coefficient}
\label{sec:directed_prelims}

Let $G=(V,E)$ be a (directed) graph and $x:V\to\mathbb{R}$ be a nodal attribute.

\begin{definition}[Newman’s assortativity coefficient (Pearson form) {\cites{Newman2003Mixing,Newman2010Networks}}]
Assume that each arc $e=(u,v)\in E$ has a nonnegative weight $w(e)$
(with $w(e)=1$ in the unweighted case), and define
\[
W:=\sum_{e\in E} w(e).
\]

Equip the finite set $E$ with the probability measure
\[
\mathbb{P}(e)=\frac{w(e)}{W}, \qquad e\in E.
\]
Let $(U,V)$ be a random directed edge (arc)
with distribution $\mathbb{P}$ and, 
from the nodal attribute $x$, define two real-valued random variables
\[
X:=x(U), \qquad Y:=x(V).
\]

Their expectations are
\[
\mathbb{E}[X]
=\sum_{(u,v)\in E} x(u)\,\frac{w(u,v)}{W},\qquad
\mathbb{E}[Y]
=\sum_{(u,v)\in E} x(v)\,\frac{w(u,v)}{W},
\]
\[
\mathbb{E}[XY]
=\sum_{(u,v)\in E} x(u)x(v)\,\frac{w(u,v)}{W},
\]
\[
\mathbb{E}[X^2]
=\sum_{(u,v)\in E} x(u)^2\,\frac{w(u,v)}{W},\qquad
\mathbb{E}[Y^2]
=\sum_{(u,v)\in E} x(v)^2\,\frac{w(u,v)}{W}.
\]
Define the variances
\[
\mathrm{Var}(X):=\mathbb{E}[X^2]-\mathbb{E}[X]^2,
\qquad
\mathrm{Var}(Y):=\mathbb{E}[Y^2]-\mathbb{E}[Y]^2.
\]

The \emph{assortativity coefficient} of the attribute $x$ is defined as the Pearson
correlation between attribute values at the tail and head of a randomly chosen
directed edge,
\[
\rho(x)
:=
\frac{\mathbb{E}[XY]-\mathbb{E}[X]\mathbb{E}[Y]}
{\sqrt{\mathrm{Var}(X)}\sqrt{\mathrm{Var}(Y)}} \in[-1,1],
\]
provided $\mathrm{Var}(X)>0$ and $\mathrm{Var}(Y)>0$.
\end{definition}

\begin{definition}[Newman’s assortativity coefficient (adjacency matrix form) {\cites{Newman2003Mixing,Newman2010Networks}}]
Suppose that an adjacency matrix $A$ is given and, identifying arc weights as $w(e) = w(i,j) = A_{i,j}$, 
let
\[
m=\sum_{i,j} A_{ij},\qquad
k_i^{\mathrm{out}}=\sum_j A_{ij},\qquad
k_j^{\mathrm{in}}=\sum_i A_{ij},
\]
and write $x_i=x(i)$. Define edge-weighted means and variances
\[
\mu_{\mathrm{out}}=\frac{1}{m}\sum_i k_i^{\mathrm{out}}x_i,
\qquad
\mu_{\mathrm{in}}=\frac{1}{m}\sum_j k_j^{\mathrm{in}}x_j,
\]
\[
\sigma_{\mathrm{out}}^2=\frac{1}{m}\sum_i k_i^{\mathrm{out}}x_i^2-\mu_{\mathrm{out}}^2,
\qquad
\sigma_{\mathrm{in}}^2=\frac{1}{m}\sum_j k_j^{\mathrm{in}}x_j^2-\mu_{\mathrm{in}}^2.
\]
Then the Pearson form of the \emph{assortativity coefficient} of the attribute $x$ is written as:
\[
\rho(x)
=
\frac{\frac{1}{m}\sum_{i,j}A_{ij}x_i x_j-\mu_{\mathrm{out}}\mu_{\mathrm{in}}}{\sigma_{\mathrm{out}}\sigma_{\mathrm{in}}}
\in[-1,1].
\]
\end{definition}

\begin{definition}[Newman's assortativity coefficient (mixing-matrix form)~\cite{Newman2003Mixing}]
\label{def:newman_mixing}
Let $x:V\to\{1,\dots,K\}$ be a discrete (categorical) attribute. We use $\mathbf{1}\{\cdot\}$ for an indicator function. Define the (directed) mixing matrix $e=(e_{pq})$ of the network with respect to $x$ by
\begin{equation*}
\begin{aligned}
e_{pq}&=\frac{1}{m}\sum_{i,j}A_{ij}\,\mathbf{1}\{x_i=p,\ x_j=q\},\\
a_p&=\sum_q e_{pq},\quad
b_q=\sum_p e_{pq}.
\end{aligned}
\end{equation*}
The mixing-matrix form of the assortativity coefficient of the discrete attribute $x$ is
\[
\rho(x)=\frac{\mathrm{Tr}(e)-\sum_{p=1}^K a_p b_p}{1-\sum_{p=1}^K a_p b_p}
\in[-1,1].
\]
\end{definition}

The mixing-matrix formulation applies only to discrete (categorical) nodal attributes; scalar attributes are handled via the Pearson and adjacency-matrix forms.

\begin{proposition}[Equivalence of assortativity formulations {\cites{Newman2003Mixing,Newman2010Networks}}]
For any graph, the Pearson form and adjacency-matrix form of assortativity
coincide for any real-valued nodal attribute.
When the attribute is discrete, these are equivalent to the mixing-matrix form.
\end{proposition}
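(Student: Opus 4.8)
The plan is to route all three formulations through a single object: the edge-endpoint probability measure $\mathbb{P}(e)=w(e)/W$ on $E$, under the identification $w(i,j)=A_{ij}$ and $W=m$. The preliminary step I would isolate is a pushforward identity: for every $f:\mathbb{R}\times\mathbb{R}\to\mathbb{R}$,
\[
\mathbb{E}\!\left[f(X,Y)\right]=\sum_{(u,v)\in E} f\!\left(x(u),x(v)\right)\frac{w(u,v)}{W}=\frac{1}{m}\sum_{i,j} A_{ij}\,f(x_i,x_j),
\]
where the second equality simply extends the sum over $E$ to all ordered pairs $(i,j)\in V\times V$, since $A_{ij}=0$ off $E$. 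This one identity is the engine behind every equivalence below, and it is where I would do essentially all the conceptual work.

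From the pushforward identity the Pearson/adjacency equivalence is immediate. Specializing to $f(s,t)=s,\ t,\ s^{2},\ t^{2},\ st$ yields
\[
\mathbb{E}[X]=\mu_{\mathrm{out}},\quad \mathbb{E}[Y]=\mu_{\mathrm{in}},\quad \mathrm{Var}(X)=\sigma_{\mathrm{out}}^2,\quad \mathrm{Var}(Y)=\sigma_{\mathrm{in}}^2,
\]
together with $\mathbb{E}[XY]=\tfrac1m\sum_{i,j}A_{ij}x_ix_j$, so that $\mathbb{E}[XY]-\mathbb{E}[X]\mathbb{E}[Y]=\tfrac1m\sum_{i,j}A_{ij}x_ix_j-\mu_{\mathrm{out}}\mu_{\mathrm{in}}$. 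Substituting into the Pearson quotient reproduces the adjacency-matrix formula verbatim. This part is routine; the only care required is the extension of the edge sum to $V\times V$ and the hypothesis that the variances are positive, which makes the shared denominator well defined.

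For the discrete case I would first identify the mixing matrix with the joint law of $(X,Y)$: taking $f(s,t)=\mathbf 1\{s=p,\,t=q\}$ gives $e_{pq}=\mathbb P(X=p,Y=q)$, hence $a_p=\mathbb P(X=p)$, $b_q=\mathbb P(Y=q)$, and $\mathrm{Tr}(e)=\mathbb P(X=Y)$. The bridge to a correlation is the one-hot encoding $\chi(U),\chi(V)\in\{0,1\}^K$ with $\chi(U)_p=\mathbf 1\{x(U)=p\}$: the trace inner product gives $\mathbb E[\chi(U)\cdot\chi(V)]=\mathrm{Tr}(e)$ and $\mathbb E[\chi(U)]\cdot\mathbb E[\chi(V)]=\sum_p a_p b_p$, so the numerator of the mixing-matrix form is exactly the (trace) covariance of $\chi(U)$ and $\chi(V)$. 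This realizes the categorical coefficient as a correlation read off the same distribution $e$ that drives the other two forms.

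The step I expect to be the genuine obstacle is matching the denominators, and here the meaning of ``equivalent'' must be handled with care. The mixing-matrix form normalizes by the agreement factor $1-\sum_p a_p b_p$, whereas the Pearson normalization of the one-hot encoding is $\sqrt{(1-\sum_p a_p^2)(1-\sum_q b_q^2)}$; these coincide precisely in the symmetric-marginal regime $a_p=b_p$, which holds automatically for undirected graphs. I would therefore establish the discrete equivalence cleanly in that regime, where the mixing-matrix coefficient is exactly the Pearson/adjacency correlation of the one-hot attribute. I would also flag that reading a discrete attribute through its integer labels $x_i\in\{1,\dots,K\}$ produces a genuinely different quantity for $K\ge 3$ (a rank-sensitive correlation rather than an agreement measure), so the equivalence is with the one-hot realization rather than with the scalar label form. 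The crux, and the only nontrivial analytic content of the discrete half, is precisely this denominator reconciliation.
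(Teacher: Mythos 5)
The paper never actually proves this proposition; it is stated as background with citations to Newman, so there is no internal proof to compare yours against, and your proposal must be judged on its own. Its first half is correct and complete: under the identification $w(i,j)=A_{ij}$, $W=m$, your pushforward identity $\mathbb{E}[f(X,Y)]=\tfrac1m\sum_{i,j}A_{ij}\,f(x_i,x_j)$ (valid because $A_{ij}=0$ off $E$) specializes to $\mathbb{E}[X]=\mu_{\mathrm{out}}$, $\mathbb{E}[Y]=\mu_{\mathrm{in}}$, $\mathrm{Var}(X)=\sigma_{\mathrm{out}}^{2}$, $\mathrm{Var}(Y)=\sigma_{\mathrm{in}}^{2}$, and $\mathbb{E}[XY]=\tfrac1m\sum_{i,j}A_{ij}x_ix_j$, which is exactly the Pearson/adjacency equivalence.

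The second half is where your care pays off, because the difficulty you isolate is real: read literally for directed graphs, the proposition's second sentence is too strong. Your identifications $e_{pq}=\mathbb{P}(X=p,\,Y=q)$, $a_p=\mathbb{P}(X=p)$, $b_q=\mathbb{P}(Y=q)$, and the one-hot computation of the numerator are correct, and the denominator mismatch you flag is not an artifact of your method: one always has
\[
1-\sum_{p} a_p b_p\;\ge\;\Bigl(\bigl(1-\textstyle\sum_p a_p^2\bigr)\bigl(1-\textstyle\sum_q b_q^2\bigr)\Bigr)^{1/2}
\]
by Cauchy--Schwarz and AM--GM, with equality if and only if $a_p=b_p$ for all $p$. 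So whenever the marginals are asymmetric and the trace covariance is nonzero, the mixing-matrix coefficient (a Cohen-kappa-type agreement index) and the Pearson correlation of the one-hot encoding are genuinely different numbers; they coincide exactly in the symmetric-marginal regime, which holds automatically for undirected graphs. Your warning about integer labels is also correct: for an undirected graph whose edges join only labels $1$--$2$ and $2$--$3$ in equal number, the integer-label Pearson coefficient is $0$ while the mixing-matrix coefficient is $-3/5$, so for $K\ge 3$ the mixing-matrix form is not the Pearson form of the labels under any direct reading. In short, you prove the strongest true version of the statement (Pearson $=$ adjacency always; mixing-matrix $=$ one-hot trace correlation under symmetric marginals), and in doing so you correctly document that the unqualified ``equivalent'' in the proposition must either be read loosely, as in Newman, or restricted to the undirected/symmetric-marginal setting.
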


\begin{definition}[Leicht--Newman directed modularity~\cite{LeichtNewman2008DirectedCommunities}]
Given a directed graph $G=(V,E)$ and a discrete nodal attribute $x:V\to\{1,\dots,K\}$, 
the Leicht--Newman directed modularity of a partition
$\mathcal{P}$ (with partition assignment corresponding to $x$) is
\begin{equation}
\label{eq:directed_modularity}
Q(G,\mathcal{P})
=\frac{1}{m}\sum_{i,j}\Big(A_{ij}-\gamma\,\frac{k_i^{\mathrm{out}}k_j^{\mathrm{in}}}{m}\Big)\,\mathbf{1}\{x(i)=x(j)\},
\end{equation}
where $\gamma>0$ is a resolution parameter. Moreover, we call $(G,\mathcal{P})$ \emph{$Q$-modular} if $Q(G,\mathcal{P})\ge q_0$, for a stated threshold $q_0 \geq 0$.
\end{definition}

\begin{proposition}[Assortativity and directed modularity {\cites{Newman2003Mixing,LeichtNewman2008DirectedCommunities}}]
\label{prop:assortativity_modularity}
Let $G=(V,E)$ be a directed graph and let $x:V\to\{1,\dots,K\}$ be a discrete nodal attribute
with induced partition $\mathcal P_x=\{x^{-1}(1),\dots,x^{-1}(K)\}$.
Then the mixing-matrix assortativity coefficient $\rho(x)$ satisfies
\[
\rho(x)
=\frac{Q(G,\mathcal P_x)}{1-\sum_{p=1}^K a_p b_p},
\]
provided that $1-\sum_{p=1}^K a_p b_p > 0$,
where $Q(G,\mathcal P_x)$ is the Leicht--Newman directed modularity
\eqref{eq:directed_modularity} with $\gamma=1$ (the standard directed modularity), and
$a_p,b_p$ are the row and column marginals of the directed mixing matrix.

Moreover, $\rho(x)$ is maximized precisely when $\mathcal P_x$ maximizes the directed modularity $Q(G,\mathcal P)$.
Thus, discrete assortativity coincides with directed modularity up to a normalization factor independent of the partition.
\end{proposition}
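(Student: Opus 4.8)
The plan is to reduce the entire statement to a single algebraic identity: that the \emph{numerator} of the mixing-matrix coefficient $\rho(x)$ coincides with the directed modularity $Q(G,\mathcal P_x)$ at $\gamma=1$. Once this is established, the ratio formula follows immediately by dividing through the common positive denominator $1-\sum_p a_p b_p$, and the maximization assertion becomes a question about that denominator.

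First I would rewrite the marginals $a_p,b_p$ in terms of degrees. Summing the definition of $e_{pq}$ over $q$ and using $\sum_j A_{ij}=k_i^{\mathrm{out}}$ gives $a_p=\frac{1}{m}\sum_{i\in x^{-1}(p)}k_i^{\mathrm{out}}$, and symmetrically $b_q=\frac{1}{m}\sum_{j\in x^{-1}(q)}k_j^{\mathrm{in}}$; that is, the row (resp.\ column) marginals are the block-restricted out-degree (resp.\ in-degree) masses. Next I would expand the two pieces of the numerator, the key point being the indicator collapse $\sum_p\mathbf{1}\{x_i=p\}\,\mathbf{1}\{x_j=p\}=\mathbf{1}\{x_i=x_j\}$. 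Applied to the diagonal of the mixing matrix this yields
\[
\mathrm{Tr}(e)=\sum_p e_{pp}=\frac{1}{m}\sum_{i,j}A_{ij}\,\mathbf{1}\{x_i=x_j\},
\]
and applied to the product of marginals it yields
\[
\sum_p a_p b_p=\frac{1}{m^2}\sum_p\Big(\sum_{i\in x^{-1}(p)}k_i^{\mathrm{out}}\Big)\Big(\sum_{j\in x^{-1}(p)}k_j^{\mathrm{in}}\Big)=\frac{1}{m^2}\sum_{i,j}k_i^{\mathrm{out}}k_j^{\mathrm{in}}\,\mathbf{1}\{x_i=x_j\}.
\]
Subtracting the two gives
\[
\mathrm{Tr}(e)-\sum_p a_p b_p=\frac{1}{m}\sum_{i,j}\Big(A_{ij}-\frac{k_i^{\mathrm{out}}k_j^{\mathrm{in}}}{m}\Big)\mathbf{1}\{x_i=x_j\}=Q(G,\mathcal P_x),
\]
which is exactly the directed modularity \eqref{eq:directed_modularity} at $\gamma=1$. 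Dividing by $1-\sum_p a_p b_p>0$ then produces the claimed formula.

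For the final, maximization, assertion the one genuinely delicate point is that the normalizing denominator $1-\sum_p a_p b_p$ is \emph{itself} partition-dependent through the marginals, so the coincidence of maximizers does not follow mechanically from the ratio formula. I would handle this by comparing partitions with prescribed block out-/in-degree masses: fixing the marginals $(a_p),(b_p)$ renders the denominator a positive constant, whereupon $\rho(x)$ is a positive scalar multiple of $Q(G,\mathcal P_x)$ and the two have identical maximizers. I expect this to be the main obstacle and the place where the hypotheses must be pinned down, since without holding the marginals fixed a varying denominator can in principle reorder partitions; by contrast, the numerator identity driving the ratio formula is entirely routine once the indicator collapse is in place.
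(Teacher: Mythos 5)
Your algebraic core is correct and complete: the marginal identities $a_p=\frac{1}{m}\sum_{i\in x^{-1}(p)}k_i^{\mathrm{out}}$ and $b_q=\frac{1}{m}\sum_{j\in x^{-1}(q)}k_j^{\mathrm{in}}$, the indicator collapse $\sum_p\mathbf{1}\{x_i=p\}\mathbf{1}\{x_j=p\}=\mathbf{1}\{x_i=x_j\}$, and the resulting identity $\mathrm{Tr}(e)-\sum_p a_p b_p=Q(G,\mathcal P_x)$ at $\gamma=1$ are exactly the computation that underlies the ratio formula. Note that the paper itself gives no proof of this proposition --- it is stated with citations to the Newman and Leicht--Newman papers --- so your verification supplies what the paper omits, and it is the standard argument one finds in that literature.

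On the maximization claim you have identified a genuine defect in the statement itself, not a gap in your own reasoning. Since $a_p,b_p$ are determined by $\mathcal P_x$, the denominator $1-\sum_p a_p b_p$ is partition-dependent, so the paper's assertion that the normalization factor is ``independent of the partition'' is false as literally written, and ``$\rho$ is maximized precisely when $Q$ is maximized'' does not follow from the ratio formula: a coarse partition with large $\sum_p a_p b_p$ (hence small denominator) can have larger $\rho$ yet smaller $Q$ than a finer partition whose denominator is close to $1$, so the two objectives can have different maximizers. Your remedy --- comparing only partitions with prescribed block degree masses, which freezes the denominator and makes $\rho$ a positive scalar multiple of $Q$ --- is a correct way to obtain equivalence of maximizers, but be explicit that this proves a strictly weaker, conditional version of the proposition's second claim rather than the unconditional statement; as a review point, that sentence of the proposition should itself be qualified (or read as holding only across partitions with common marginals).
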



The interplay of stratifying and attribute-induced partitions exhibits several interesting features. So, let us fix a partition $\mathcal P=\{V_1,\dots,V_k\}$ (the \emph{stratifying partition}) and let $x:V\to\{1,\dots,L\}$ be a discrete nodal attribute.
Let $\mathcal Q=\{H_1,\dots,H_L\}$ be the partition induced by $x$, where $H_r:=\{v\in V:\ x(v)=r\}$.
Node-overlap summaries of $\mathcal P$ and $\mathcal Q$ (e.g.\ Jaccard indices
$J(V_k,H_r)=|V_k\cap H_r|/|V_k\cup H_r|$) quantify agreement at the level of node sets, but the assortativity profiles depend on how \emph{edge weight}
is distributed across the refinement blocks $V_k\cap H_r$ (and their interior/boundary splits).

\begin{proposition}[Two-partition refinement representation of the profile]
\label{thm:two_partitions_refinement}
Let $A=(A_{uv})$ be the (possibly weighted) adjacency matrix and fix $\mathcal P$.
For each $k$, write $I_k:=\mathrm{Int}(V_k)$ and $B_k:=\mathrm{Bdy}(V_k)$, and form the four $\mathcal P$-strata of arcs
\[
E \;=\; E_{I\to I}\ \bigsqcup\ E_{I\to B}\ \bigsqcup\ E_{B\to I}\ \bigsqcup\ E_{B\to B},
\]
with $E_{I\to I},E_{I\to B},E_{B\to I},E_{B\to B}$ defined as above. 
Fix a type $T\in\{I\!\to\!I,\ I\!\to\!B,\ B\!\to\!I,\ B\!\to\!B\}$ and define the type-restricted adjacency
$A^{(T)}_{uv}:=A_{uv}\mathbf 1\{(u,v)\in E_T\}$ and total mass $m_T:=\sum_{u,v}A^{(T)}_{uv}$.
For attribute labels $(r,s)\in\{1,\dots,L\}^2$, define the type--label masses
\[
m_T^{rs}:=\sum_{u,v} A^{(T)}_{uv}\,\mathbf 1\{u\in H_r,\ v\in H_s\}.
\]
If $m_T>0$, the type-restricted mixing matrix is $e^{(T)}_{rs}:=m_T^{rs}/m_T$ and the corresponding categorical coefficient is
\[
\rho_T(x;\mathcal P)
=\frac{\mathrm{Tr}(e^{(T)})-\sum_{r=1}^L a_r^{(T)} b_r^{(T)}}{1-\sum_{r=1}^L a_r^{(T)} b_r^{(T)}},
\qquad
a_r^{(T)}:=\sum_s e^{(T)}_{rs},\ \ b_s^{(T)}:=\sum_r e^{(T)}_{rs},
\]
whenever the denominator is nonzero. The interior--boundary assortativity profile is then
\[
\boldsymbol{\rho}(x;\mathcal P)
=\big(\rho_{I\to I}(x;\mathcal P),\,\rho_{I\to B}(x;\mathcal P),\,\rho_{B\to I}(x;\mathcal P),\,\rho_{B\to B}(x;\mathcal P)\big)\in[-1,1]^4,
\]
with the convention that a component is undefined when $m_T=0$ (and/or the denominator vanishes).

Moreover, each $m_T^{rs}$ is determined by edge weights between the \emph{refinement blocks} $I_k\cap H_r$ and $B_k\cap H_r$:
\begin{align*}
m_{I\to I}^{rs}&=\sum_{k=1}^K \ \sum_{u\in I_k\cap H_r}\ \sum_{v\in I_k\cap H_s} A_{uv},\\
m_{I\to B}^{rs}&=\sum_{k=1}^K \ \sum_{u\in I_k\cap H_r}\ \sum_{v\in B_k\cap H_s} A_{uv},\\
m_{B\to I}^{rs}&=\sum_{k=1}^K \ \sum_{u\in B_k\cap H_r}\ \sum_{v\in I_k\cap H_s} A_{uv},\\
m_{B\to B}^{rs}&=\sum_{p=1}^K\sum_{q=1}^K \ \sum_{u\in B_p\cap H_r}\ \sum_{v\in B_q\cap H_s} A_{uv}.
\end{align*}
In particular, the profile $\boldsymbol{\rho}(x;\mathcal P)$ is not determined by node-overlap indices (such as Jaccard similarities between
blocks of $\mathcal P$ and $\mathcal Q$) alone, but by the edge-weighted interaction pattern among the refinement blocks.
\end{proposition}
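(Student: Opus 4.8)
The plan is to treat this proposition as a structural bookkeeping identity proved by direct verification, together with one small counterexample for the closing non-determinacy assertion. The only genuinely structural input is the defining property of interior nodes, namely that no arc incident to an interior node (in either direction) crosses its block boundary; everything else is a reindexing of sums. First I would record that the four arc strata partition $E$. Since each node of $V_k$ is classified as exactly one of interior or boundary, we have $V=I\sqcup B$ with $I=\bigsqcup_k I_k$ and $B=\bigsqcup_k B_k$, so every arc $(u,v)\in E$ lies in exactly one of the four source--target type classes $\{I,B\}\times\{I,B\}$. This makes $A^{(T)}$, $m_T$, and $m_T^{rs}$ well defined, with $\sum_T A^{(T)}=A$ and $\sum_T m_T=m$.

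The crux is the derivation of the four refinement formulas for $m_T^{rs}$. Unfolding the definition gives $m_T^{rs}=\sum_{(u,v)\in E_T}A_{uv}\,\mathbf 1\{u\in H_r,\ v\in H_s\}$, so it suffices to identify which block pairs the endpoints can occupy for each type. The decisive observation is that if \emph{either} endpoint of an arc is interior, confinement forces \emph{both} endpoints into the same block: if $u\in I_k$ then every arc out of $u$ lands in $V_k$, and symmetrically if $v\in I_k$ then every arc into $v$ originates in $V_k$. Hence for the three types $I\!\to\!I$, $I\!\to\!B$, $B\!\to\!I$ the endpoints share a common block index $k$, yielding a single sum over $k$ with the indicated refinement blocks $I_k\cap H_r$, $B_k\cap H_s$, etc. Only in the $B\!\to\!B$ type can an arc cross blocks, so its mass is summed over \emph{independent} indices $p,q$, which is precisely the double sum in the statement. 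Collecting the surviving terms gives the four displayed identities.

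I would then note that $\rho_T(x;\mathcal P)$ is nothing more than Definition~\ref{def:newman_mixing} applied verbatim to the weighted adjacency matrix $A^{(T)}$: by construction $e^{(T)}_{rs}=m_T^{rs}/m_T$ is nonnegative and sums to $1$ (since $\sum_{r,s}m_T^{rs}=m_T$), its marginals $a^{(T)}_r,b^{(T)}_s$ are probability vectors, and the categorical formula applies whenever $m_T>0$ and $1-\sum_r a^{(T)}_rb^{(T)}_r\neq 0$; the range $\boldsymbol\rho\in[-1,1]^4$ follows componentwise. For the non-determinacy claim I would exhibit two configurations on the same node set with identical overlaps $|V_k\cap H_r|$, hence identical Jaccard indices $J(V_k,H_r)$, but different profiles: keeping $\mathcal P$, $\mathcal Q$, and all node counts fixed, one redistributes edge weight among node pairs lying in the same refinement blocks so that one configuration concentrates weight on same-label pairs and the other on cross-label pairs. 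Since the $m_T^{rs}$ are genuine edge-weight sums rather than functions of the counts $|I_k\cap H_r|$ alone, this changes at least one $\rho_T$ while leaving every Jaccard index untouched, contradicting any putative dependence of $\boldsymbol\rho$ on node-overlap indices.

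The main obstacle I anticipate is not analytic difficulty but correctly tracking the confinement constraint so as to separate the three single-block types from the genuinely two-block $B\!\to\!B$ type: establishing the single-versus-double sum structure is the entire content of the refinement identities, and it is the one place where the interior/boundary definitions of Section~4 are used in an essential way. The counterexample is then routine once one observes that the $m_T^{rs}$ see individual weighted arcs whereas Jaccard indices see only set cardinalities.
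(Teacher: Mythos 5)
Your proposal is correct and amounts to the ``straightforward proof'' that the paper alludes to but omits (the remark after the proposition simply asserts the result): the confinement property of interior nodes forces a single common block index for the three intra-group types and independent block indices only for $B\to B$, and each $\rho_T$ is Newman's mixing-matrix coefficient applied verbatim to $A^{(T)}$. The one detail worth making explicit in your closing counterexample is that the weight redistribution must keep the arc support fixed (all weights positive on the same arc set), so that the interior/boundary classification and the strata are unchanged while the masses $m_T^{rs}$---and hence the profile---differ even though every Jaccard index stays the same.
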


\begin{remark*}
The proposition, which admits a straightforward proof, shows that assortativity profiles capture interactions
between partitions at the level of edge flow, not merely node overlap.
\end{remark*}

\subsection*{Adjustments in the undirected case}
\label{subsec:undirected_adjustments}

All the previous discussion was referring to directed graphs by default. The undirected case is obtained by the following specializations. Let $G$ be an undirected (possibly weighted) graph with symmetric adjacency matrix $A=(A_{ij})$, where $A_{ij}\ge 0$ is the edge weight
(and $A_{ij}\in\{0,1\}$ if unweighted). Set
\[
m=\frac{1}{2}\sum_{i,j}A_{ij},\qquad k_i=\sum_j A_{ij}.
\]
Then, considering a scalar nodal attribute $x$, the covariance/modularity-matrix identity
\[
\mathrm{cov}(x_i,x_j)=\frac{1}{2m}\sum_{i,j}\Big(A_{ij}-\frac{k_i k_j}{2m}\Big)x_i x_j
\]
reduces the attribute assortativity coefficient (Newman, Eq.~(7.80)) to the form
\[
\rho(x)=\frac{\sum_{i,j}\big(A_{ij}-\frac{k_i k_j}{2m}\big)x_i x_j}{\sum_{i,j}\big(k_i\delta_{ij}-\frac{k_i k_j}{2m}\big)x_i x_j},
\]
where $\delta_{ij}$ is the Kronecker delta: $\delta_{ij}=1$ if $i=j$ and $\delta_{ij}=0$ otherwise.

Furthermore, the Leicht--Newman directed modularity for discrete nodal attributes  reduces to the standard Newman--Girvan modularity~\cite{NewmanGirvan2004}
\[
Q(G,\mathcal{P})=\frac{1}{2m}\sum_{i,j}\Big(A_{ij}-\gamma\,\frac{k_i k_j}{2m}\Big)\,\mathbf{1}\{x(i)=x(j)\}.
\]

\section{Interior--boundary assortativity profiles}
\label{sec:profiles_three}

Let $G = (V,E)$ directed graph.
For $v\in V$, define in- and out-neighborhoods
$N^{\mathrm{in}}(v)=\{u\in V:(u,v)\in E\}$ and
$N^{\mathrm{out}}(v)=\{w\in V:(v,w)\in E\}$.
For each group $V_k\in\mathcal{P}$, define its \emph{interior} and \emph{boundary}:
\begin{equation}
\label{eq:I_Bd}
\begin{aligned}
\mathrm{Int}(V_k)
&=\{v\in V_k:\ N^{\mathrm{in}}(v)\cup N^{\mathrm{out}}(v)\subseteq V_k\},\\
\mathrm{Bdy}(V_k)
&=V_k\setminus \mathrm{Int}(V_k).
\end{aligned}
\end{equation}
Thus, boundary nodes are precisely those having at least one incident arc whose other endpoint lies in a different group.

\medskip
\noindent\textbf{Arc stratification.}
Every arc $(u,v)\in E$ has a tail $u$ and a head $v$, each of which is either interior or boundary with respect to its unique group in $\mathcal P$.
This induces a canonical disjoint decomposition of the entire arc set:
\[
E
=\;E_{I\to I}\ \bigsqcup\ E_{I\to B}\ \bigsqcup\ E_{B\to I}\ \bigsqcup\ E_{B\to B},
\]
where
\begin{align*}
E_{I\to I}
&:=\{(u,v)\in E:\ \exists k,\ u\in\mathrm{Int}(V_k),\ v\in\mathrm{Int}(V_k)\},\\
E_{I\to B}
&:=\{(u,v)\in E:\ \exists k,\ u\in\mathrm{Int}(V_k),\ v\in\mathrm{Bdy}(V_k)\},\\
E_{B\to I}
&:=\{(u,v)\in E:\ \exists k,\ u\in\mathrm{Bdy}(V_k),\ v\in\mathrm{Int}(V_k)\},\\
E_{B\to B}
&:=\{(u,v)\in E:\ \exists p,q,\ u\in\mathrm{Bdy}(V_p),\ v\in\mathrm{Bdy}(V_q)\}.
\end{align*}
The first three strata consist necessarily of intra-group arcs, whereas $E_{B\to B}$ contains both
within-group and cross-group boundary--boundary arcs.
In particular, if the graph is multipartite with respect to $\mathcal P$, then
$\mathrm{Int}(V_k)=\varnothing$ for all $k$ and $E_{B\to B}=E$.

For each type $T\in\{I\!\to\!I,\ I\!\to\!B,\ B\!\to\!I,\ B\!\to\!B\}$, define the corresponding type-restricted adjacency matrix
\begin{equation*}
\begin{aligned}
A^{(T)}_{ij}&=A_{ij}\,\mathbf{1}\{(i,j)\in E_T\},\\
m_T&=\sum_{i,j}A^{(T)}_{ij},\qquad
k^{\mathrm{out},T}_i=\sum_j A^{(T)}_{ij},\qquad
k^{\mathrm{in},T}_j=\sum_i A^{(T)}_{ij}.
\end{aligned}
\end{equation*}

Thus, 
\begin{equation*}
\begin{aligned}
A&=\sum_T A^{(T)} \\
m&=\sum_T m_T.
\end{aligned}
\end{equation*}

\subsubsection*{Profiles for scalar attributes}
Let $x:V\to\mathbb{R}$ and write $x_i=x(i)$. For each type $T$ with $m_T>0$, define
\begin{equation*}
\begin{aligned}
\mu_{\mathrm{out},T}&=\frac{1}{m_T}\sum_i k_i^{\mathrm{out},T}x_i,
\quad
\mu_{\mathrm{in},T}=\frac{1}{m_T}\sum_j k_j^{\mathrm{in},T}x_j,\\
\sigma_{\mathrm{out},T}^2
&=\frac{1}{m_T}\sum_i k_i^{\mathrm{out},T}x_i^2-\mu_{\mathrm{out},T}^2,\\
\sigma_{\mathrm{in},T}^2
&=\frac{1}{m_T}\sum_j k_j^{\mathrm{in},T}x_j^2-\mu_{\mathrm{in},T}^2.
\end{aligned}
\end{equation*}
The type-restricted assortativity (in Newman’s Pearson or adjacency form) is
\begin{equation*}
\begin{aligned}
\rho_T(x;\mathcal{P})
&=
\frac{\frac{1}{m_T}\sum_{i,j}A^{(T)}_{ij}x_i x_j-\mu_{\mathrm{out},T}\mu_{\mathrm{in},T}}
{\sigma_{\mathrm{out},T}\sigma_{\mathrm{in},T}} \\
&=
\frac{\frac{1}{m_T}\sum_{i,j}\Big(A^{(T)}_{ij}
-\frac{k_i^{\mathrm{out},T}k_j^{\mathrm{in},T}}{m_T}\Big)x_i x_j}
{\sigma_{\mathrm{out},T}\sigma_{\mathrm{in},T}}
\in[-1,1].
\end{aligned}
\end{equation*}
The \emph{interior--boundary assortativity profile} is
\[
\boldsymbol{\rho}(x;\mathcal{P})
=
\big(\rho_{I\to I}(x;\mathcal{P}),\,
\rho_{I\to B}(x;\mathcal{P}),\,
\rho_{B\to I}(x;\mathcal{P}),\,
\rho_{B\to B}(x;\mathcal{P})\big)
\in[-1,1]^4,
\]
with the convention that components are undefined whenever $m_T=0$. Thus, each component measures assortative mixing restricted to a specific interior–boundary interaction regime.

\subsubsection*{Profiles for categorical attributes}
Let $x:V\to\{1,\dots,K\}$ be a discrete (categorical) nodal attribute.
For each type $T$ with $m_T>0$, define the type-restricted mixing matrix
$e^{(T)}=(e^{(T)}_{pq})$ by
\begin{equation*}
\begin{aligned}
e^{(T)}_{pq}
&=\frac{1}{m_T}\sum_{i,j}A^{(T)}_{ij}\,
\mathbf{1}\{x(i)=p,\ x(j)=q\},\\
a^{(T)}_p&=\sum_q e^{(T)}_{pq},
\qquad
b^{(T)}_q=\sum_p e^{(T)}_{pq}.
\end{aligned}
\end{equation*}
The type-restricted assortativity coefficient is defined as
\[
\rho_T(x;\mathcal{P})
=
\frac{\mathrm{Tr}(e^{(T)})-\sum_{p=1}^K a^{(T)}_p b^{(T)}_p}
{1-\sum_{p=1}^K a^{(T)}_p b^{(T)}_p}
\in[-1,1],
\]
provided that $1-\sum_{p=1}^K a^{(T)}_p b^{(T)}_p>0$.

The \emph{interior--boundary assortativity profile} for a categorical attribute is
\[
\boldsymbol{\rho}(x;\mathcal{P})
=
\big(\rho_{I\to I}(x;\mathcal{P}),\,
\rho_{I\to B}(x;\mathcal{P}),\,
\rho_{B\to I}(x;\mathcal{P}),\,
\rho_{B\to B}(x;\mathcal{P})\big)
\in[-1,1]^d,
\]
with undefined components when $m_T=0$.

\subsection*{Adjustments in the undirected case}
\label{subsec:undirected_adjustments_p}
If $G$ is undirected, replace $N^{\mathrm{in}}(v)\cup N^{\mathrm{out}}(v)$ by the undirected neighbor set $N(v)$ in the definition of the interior:
\begin{equation*}
\begin{aligned}
\mathrm{Int}(V_k)&=\{v\in V_k:\ N(v)\subseteq V_k\},\\
\mathrm{Bdy}(V_k)&=V_k\setminus \mathrm{Int}(V_k).
\end{aligned}
\end{equation*}
The (undirected) edge set decomposes into three strata:
\[
E \;=\; E_{II}\ \bigsqcup\ E_{IB}\ \bigsqcup\ E_{BB},
\]
where $E_{II}$ (resp.\ $E_{IB}$) consists of edges with both endpoints interior (resp.\ one interior and one boundary) of the \emph{same} group,
and $E_{BB}$ consists of edges with both endpoints boundary (including both intra-group and cross-group boundary--boundary edges).
Accordingly, the profile dimension is 3 and one writes
\[
\boldsymbol{\rho}(x;\mathcal{P})=\big(\rho_{II}(x;\mathcal{P}),\,\rho_{IB}(x;\mathcal{P}),\,\rho_{BB}(x;\mathcal{P})\big)\in[-1,1]^3,
\]
with the same convention about undefined components when the corresponding stratum is empty (and/or variances vanish).

\subsection*{Multipartite undirected graphs with consistent attributes}
\label{subsec:multipartite_consistent}

Let $G=(V,E)$ be an undirected (possibly weighted) graph and let $\mathcal P=\{V_1,\dots,V_k\}$ be a partition of $V$.
A discrete nodal attribute $x:V\to [K]$ is called \emph{$\mathcal P$-consistent} if it induces $\mathcal P$, i.e.
\[
V_k = x^{-1}(k)\qquad (k=1,\dots,K),
\]
equivalently, $x$ is constant on each $V_k$ with distinct values across the $V_k$'s (up to relabeling).

We say that $G$ is \emph{$\mathcal P$-multipartite} if there are no within-group edges, i.e.
\[
\{u,v\}\in E \ \Longrightarrow\ \exists\,p\neq q \text{ such that } u\in V_p,\ v\in V_q,
\]
or equivalently $E\cap \binom{V_k}{2}=\varnothing$ for all $k$.

\begin{proposition}[Profiles for $\mathcal P$-multipartite graphs]
\label{prop:multipartite_profile}
Assume that $G$ is $\mathcal P$-multipartite and that $x:V\to[K]$ is $\mathcal P$-consistent.
Then every non-isolated node is a boundary node (relative to $\mathcal P$), and isolated nodes (if any) are interior.
Consequently, the undirected strata satisfy
\[
E_{II}=\varnothing,\qquad E_{IB}=\varnothing,\qquad E_{BB}=E,
\]
and the interior--boundary assortativity profile collapses to its $BB$ component:
\[
\boldsymbol{\rho}(x;\mathcal P)=\big(\rho_{II}(x;\mathcal P),\,\rho_{IB}(x;\mathcal P),\,\rho_{BB}(x;\mathcal P)\big)
=\big(\text{undef},\,\text{undef},\,\rho_{BB}(x;\mathcal P)\big),
\]
with
\[
\rho_{BB}(x;\mathcal P)=\rho(x),
\]
i.e.\ the $BB$ component coincides with Newman’s assortativity of $x$ computed on the full edge set.

Moreover, letting $m=\frac12\sum_{i,j}A_{ij}$ and $k_i=\sum_j A_{ij}$, define the edge-end fractions
\[
a_p := \frac{1}{2m}\sum_{i\in V_p} k_i \qquad (p=1,\dots,K),
\]
so that $\sum_{p=1}^K a_p=1$. In the $\mathcal P$-multipartite case the mixing matrix has zero diagonal, hence
\[
\rho(x)= -\frac{\sum_{p=1}^K a_p^2}{1-\sum_{p=1}^K a_p^2}\le 0,
\]
with equality $\rho(x)=-1$ if and only if the edge mass is supported on exactly two parts with $a_p=a_q=\tfrac12$
(i.e.\ the edge-induced subgraph is effectively bipartite).
Hence, multipartite structure enforces non-positive assortativity for $\mathcal P$-consistent attributes.
\end{proposition}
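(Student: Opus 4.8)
The plan is to split the argument into a structural (combinatorial) part that establishes the collapse of the profile, and an algebraic part that evaluates and bounds the one surviving component.

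\textbf{Structural part.} First I would settle the node classification. Since $G$ is $\mathcal P$-multipartite, every edge joins two distinct blocks, so any non-isolated $v\in V_p$ has a neighbor $u\in V_q$ with $q\neq p$; hence $N(v)\not\subseteq V_p$ and $v\in\mathrm{Bdy}(V_p)$. Conversely an isolated node has $N(v)=\varnothing\subseteq V_p$ and is interior. Because each edge has two non-isolated (hence boundary) endpoints, no edge is incident to an interior node, which gives $E_{II}=E_{IB}=\varnothing$ and $E_{BB}=E$. Consequently the $BB$ stratum carries the full adjacency, $A^{(BB)}=A$ and $m_{BB}=m$, so the type-restricted mixing matrix $e^{(BB)}$ coincides with the global mixing matrix of $x$, whence $\rho_{BB}(x;\mathcal P)=\rho(x)$ directly from the definitions.

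\textbf{Algebraic part.} Next I evaluate $\rho(x)$ via the undirected mixing-matrix form. Summing $k_i=\sum_j A_{ij}$ over $i\in V_p$ and then over $p$ yields $\sum_p a_p=\frac{1}{2m}\sum_i k_i=1$, since $\sum_{i,j}A_{ij}=2m$. Because $x$ is $\mathcal P$-consistent, the diagonal entry $e_{pp}$ records within-block edge mass, which vanishes by multipartiteness, so $\mathrm{Tr}(e)=0$; symmetry of $A$ forces $b_p=a_p$. Writing $S:=\sum_p a_p^2$, the coefficient reduces to
\[
\rho(x)=\frac{0-S}{1-S}=-\frac{S}{1-S}.
\]

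\textbf{Bound and equality (main obstacle).} The crux is the inequality $S\le\tfrac12$, which simultaneously certifies $\rho(x)\in[-1,0]$ and pins down the extremal case. I would extract the marginal bound $a_p\le\tfrac12$ from the structure of $e$ alone: since $e$ is symmetric, nonnegative, and zero on the diagonal,
\[
1-a_p=\sum_{q\neq p}a_q=\sum_{q\neq p}\sum_r e_{qr}\ge\sum_{q\neq p}e_{qp}=\sum_{q\neq p}e_{pq}=a_p,
\]
so $a_p\le\tfrac12$ for every $p$. Then $S=\sum_p a_p^2\le(\max_p a_p)\sum_p a_p=\max_p a_p\le\tfrac12$, giving $\rho(x)\le 0$ with denominator $1-S\ge\tfrac12>0$ whenever edges are present. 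For the equality case, $\rho(x)=-1$ is equivalent to $S=\tfrac12$, which forces both inequalities to be tight: $\sum_p a_p(\max-a_p)=0$ makes each $a_p\in\{0,\max\}$, and $\max=\tfrac12$; since the nonzero fractions all equal $\tfrac12$ and sum to $1$, exactly two blocks carry edge mass with $a_p=a_q=\tfrac12$, i.e.\ the effectively bipartite configuration. I expect this final paragraph — deriving the realizability bound $a_p\le\tfrac12$ from symmetry plus zero diagonal and then running the equality analysis through the two tightness conditions — to be the only genuinely non-routine step; the remainder is bookkeeping across the strata.
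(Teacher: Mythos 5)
Your proof is correct. Note that the paper states this proposition without giving any proof, so there is no official argument to compare against; yours is the natural one that the paper leaves implicit. The structural half (multipartiteness forces every non-isolated node into the boundary, hence $E_{II}=E_{IB}=\varnothing$, $E_{BB}=E$, $A^{(BB)}=A$, and $\rho_{BB}(x;\mathcal P)=\rho(x)$ by definition) and the reduction via the zero-diagonal, symmetric mixing matrix to $\rho(x)=-S/(1-S)$ with $S=\sum_p a_p^2$ are pure bookkeeping, as you say. Where you add genuine content is the marginal bound $a_p\le\tfrac12$, derived only from symmetry and the vanishing diagonal of $e$, together with the two tightness conditions in $S\le\max_p a_p\le\tfrac12$. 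This is exactly what is needed to make the equality characterization ($\rho(x)=-1$ iff exactly two parts carry edge mass, each with $a_p=\tfrac12$) rigorous, and it yields for free both the consistency check $\rho(x)\ge-1$ and the positivity of the denominator $1-S\ge\tfrac12$ whenever $E\neq\varnothing$ --- a well-definedness point the paper's statement silently assumes. One cosmetic remark: the converse implication of the equality case (two balanced parts $\Rightarrow S=\tfrac12\Rightarrow\rho(x)=-1$) deserves one explicit sentence, though your phrasing ``$\rho(x)=-1$ is equivalent to $S=\tfrac12$'' already covers it in substance.
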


\subsection*{Participation coefficient as a node-level interface descriptor}
\label{subsec:participation}

The interior--boundary framework and the assortativity profiles introduced above are edge-level, second-order summaries of attribute mixing across interfaces.
It is useful to complement them with a standard node-level descriptor that quantifies how broadly a node’s connections are distributed across groups.
For this purpose we use the \emph{participation coefficient}
introduced by Guimer\`a and Amaral~\cite{GuimeraAmaral2005}.

\begin{definition}[Participation coefficient {\cite{GuimeraAmaral2005}}]
Let $G=(V,E)$ be a directed (possibly weighted) graph with partition
$\mathcal P=\{V_1,\dots,V_K\}$.
For a node $v\in V$, define the out- and in-strengths
\[
k_v^{\mathrm{out}}=\sum_j A_{vj}, \qquad
k_v^{\mathrm{in}}=\sum_i A_{iv},
\]
and the group-resolved strengths
\[
k_{v\to V_k}^{\mathrm{out}}=\sum_{j\in V_k}A_{vj},
\qquad
k_{V_k\to v}^{\mathrm{in}}=\sum_{i\in V_k}A_{iv}.
\]
When the corresponding total strength is nonzero, the directed participation coefficients are defined as
\[
P^{\mathrm{out}}(v)
=1-\sum_{k=1}^K
\left(\frac{k_{v\to V_k}^{\mathrm{out}}}{k_v^{\mathrm{out}}}\right)^2,
\qquad
P^{\mathrm{in}}(v)
=1-\sum_{k=1}^K
\left(\frac{k_{V_k\to v}^{\mathrm{in}}}{k_v^{\mathrm{in}}}\right)^2.
\]
\end{definition}

These quantities take values in $[0,1]$ and measure, respectively, how evenly $v$’s outgoing or incoming edge weight is distributed across the groups of $\mathcal P$.
Low participation indicates that most connections remain within a single group, whereas high participation indicates broad cross-group connectivity.

Participation coefficients are \emph{node-level} descriptors and do not involve attribute values.
In contrast, the interior--boundary assortativity profiles are \emph{edge-level} statistics that measure how attribute values mix across specific interface regimes ($I\!\to\!I$, $I\!\to\!B$, $B\!\to\!I$, $B\!\to\!B$).
High participation implies that a node is necessarily a boundary node, but the converse need not hold.
Moreover, participation alone does not determine the sign or magnitude of any component $\rho_T(x;\mathcal P)$.

In applications, participation coefficients provide a convenient way to quantify \emph{interface exposure} at the node level and to control for purely structural effects.
For example, negative $B\!\to\!I$ assortativity for mediation-like attributes is most informative when it persists after conditioning on boundary nodes with high $P^{\mathrm{out}}$, indicating that the effect is not driven solely by the number of groups a node connects to, but by systematic attribute contrasts across the interface.

\section{A profile collapse theorem} 
\label{sec:decomposition}
Let $a:V\to\mathbb{R}$ be a scalar nodal attribute and fix a partition $\mathcal P$. For each arc type $T$, let
\[
S_T(x)=\{(a(u),a(v)):\ (u,v)\in E_T\}
\]
denote the paired endpoint sample restricted to arcs of type $T$. We also consider the aggregated paired sample over all intra-group arcs,
\[
S_{\mathrm{in}}(a)=\bigcup_{T\in\{I\!\to\!I,I\!\to\!B,B\!\to\!I\}} S_T(a),
\]
with correlation $r_{\mathrm{in}}(a)=\mathrm{Corr}(S_{\mathrm{in}}(a))$, whenever defined.

The next theorem makes precise how a single assortativity coefficient
collapses the interior--boundary assortativity profile.

\begin{theorem}[Profile collapse theorem for scalar assortativity]
\label{thm:decomposition}
Assume all variances needed below are positive. 
Let $T$ range over $\{I\!\to\!I,I\!\to\!B,B\!\to\!I,B\!\to\!B\}$ (or $\{II,IB,BB\}$ in the undirected case). For intra-group aggregation, define
\[
\pi_T := \frac{|E_T|}{\sum_{T'\in\mathcal T_{\mathrm{in}}} |E_{T'}|},
\qquad
\mathcal T_{\mathrm{in}}=\{I\!\to\!I,I\!\to\!B,B\!\to\!I\}
\]
(or $\{II,IB\}$ in the undirected case).
Then
\begin{equation}
\label{eq:decomp}
\mathrm{Cov}\!\left(S_{\mathrm{in}}(a)\right)=\sum_T \pi_T\,\mathrm{Cov}\!\left(S_T(a)\right)\;+\;\mathrm{Cov}_{\mathrm{between}},
\end{equation}
where $\mathrm{Cov}_{\mathrm{between}}$ is an explicit between-type covariance term determined by differences of type-specific endpoint means. Consequently,
\begin{equation}
\label{eq:decomp_corr}
r_{\mathrm{in}}(a)=
\frac{\sum_T \pi_T\,\sigma_{x,T}\sigma_{y,T}\,r_T(a)\;+\;\mathrm{Cov}_{\mathrm{between}}}{\sigma_{a,\mathrm{in}}\sigma_{y,\mathrm{in}}},
\end{equation}
where $\sigma_{x,T},\sigma_{y,T}$ are within-type endpoint standard deviations and $\sigma_{x,\mathrm{in}},\sigma_{y,\mathrm{in}}$ are the endpoint standard deviations on $S_{\mathrm{in}}(a)$.
Equation~\eqref{eq:decomp_corr} shows that the scalar assortativity $r_{\mathrm{in}}(a)$ is a weighted aggregation of the profile components $\{\rho_T(a;\mathcal P)\}$, plus a between-type mean-shift term.
\end{theorem}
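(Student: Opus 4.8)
The plan is to recognize \eqref{eq:decomp} as the empirical (finite-sample) law of total covariance applied to a two-stage sampling scheme over the intra-group arcs. First I would introduce the uniform probability space on the pooled arc set: let $N=\sum_{T'\in\mathcal T_{\mathrm{in}}}|E_{T'}|$ and sample an arc $(U,V)$ uniformly from $\bigsqcup_{T\in\mathcal T_{\mathrm{in}}} E_T$, so that each arc carries mass $1/N$. Setting $X=a(U)$, $Y=a(V)$, and letting $\tau\in\mathcal T_{\mathrm{in}}$ denote the (random) type of the sampled arc, the disjointness of the arc strata established in Section~4 guarantees that $\mathbb P(\tau=T)=|E_T|/N=\pi_T$ and that conditioning on $\{\tau=T\}$ recovers exactly the uniform law on $S_T(a)$. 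Under this identification the pooled empirical covariance of $S_{\mathrm{in}}(a)$ equals $\mathrm{Cov}(X,Y)$, the within-type covariance $\mathrm{Cov}(S_T(a))$ equals $\mathrm{Cov}(X,Y\mid\tau=T)$, and the type-specific endpoint means $\mu_{x,T},\mu_{y,T}$ equal $\mathbb E[X\mid\tau=T]$ and $\mathbb E[Y\mid\tau=T]$.

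Second, I would invoke the law of total covariance,
\[
\mathrm{Cov}(X,Y)=\mathbb E\big[\mathrm{Cov}(X,Y\mid\tau)\big]+\mathrm{Cov}\big(\mathbb E[X\mid\tau],\,\mathbb E[Y\mid\tau]\big).
\]
The first term expands to $\sum_T \pi_T\,\mathrm{Cov}(S_T(a))$, which is precisely the within-type sum in \eqref{eq:decomp}. The second term is the between-type contribution; writing $\bar\mu_x=\sum_T\pi_T\mu_{x,T}$ and $\bar\mu_y=\sum_T\pi_T\mu_{y,T}$ for the pooled endpoint means, it takes the explicit form
\[
\mathrm{Cov}_{\mathrm{between}}=\sum_T \pi_T\,(\mu_{x,T}-\bar\mu_x)(\mu_{y,T}-\bar\mu_y),
\]
manifestly determined by differences of type-specific endpoint means, as claimed. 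This establishes \eqref{eq:decomp}.

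Third, to obtain the correlation form \eqref{eq:decomp_corr} I would divide the covariance identity by the product of pooled endpoint standard deviations $\sigma_{x,\mathrm{in}}\sigma_{y,\mathrm{in}}$ and substitute the factorization $\mathrm{Cov}(S_T(a))=\sigma_{x,T}\sigma_{y,T}\,r_T(a)$ for each stratum, where $r_T(a)$ is the within-type Pearson correlation, equal to the profile component $\rho_T(a;\mathcal P)$ under the arc-counting normalization. The positivity hypothesis on all variances ensures that every correlation and every division appearing here is well defined.

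The main obstacle I anticipate is not the probabilistic identity itself, which is standard, but the careful bookkeeping needed to keep the directed structure straight: because tails and heads are sampled asymmetrically, the two marginals $X$ and $Y$ carry distinct conditional means and variances ($\mu_{x,T}\neq\mu_{y,T}$ and $\sigma_{x,T}\neq\sigma_{y,T}$ in general), so the between-type term is a genuine cross-covariance of conditional means rather than a single-variable variance. I would also take care to verify the weight reconciliation, namely that pooling the $S_T(a)$ with the count weights $\pi_T=|E_T|/N$ is consistent with uniform sampling over the disjoint union, and to flag the identification $r_T(a)=\rho_T(a;\mathcal P)$, which holds verbatim for unweighted arcs and requires replacing arc counts by edge-mass fractions in the weighted case.
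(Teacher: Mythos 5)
Your proposal is correct and follows essentially the same route as the paper: the paper's proof is likewise the (finite-sample) law of total covariance applied to the arc-type stratification, producing the identical between-type term $\sum_T \pi_T(\mu_{X,T}-\mu_X)(\mu_{Y,T}-\mu_Y)$ and the same substitution $\mathrm{Cov}_T=\sigma_{X,T}\sigma_{Y,T}\,r_T(a)$ to pass to correlations. The only difference is presentational—you invoke the probabilistic law abstractly via a uniform two-stage sampling scheme, while the paper verifies it by direct expansion of $(X_{uv}-\mu_X)(Y_{uv}-\mu_Y)$ with vanishing cross terms, and works with general arc weights $w(u,v)$ rather than counts (the reconciliation you correctly flag at the end).
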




\begin{remark*}
Even if $r_{B\to I}(a)$ (or $r_{IB}(a)$) is strongly negative, the scalar $r_{\mathrm{in}}(a)$ may be close to zero due to (i) mixing of opposite-signed within-type correlations and (ii) cancellation with the between-type mean-shift term. This is the rigorous sense in which one-number assortativity can hide interface structure.
\end{remark*}

\begin{corollary}[Loss of resolution under scalar aggregation]
Two attributes $a,b:V\to\mathbb R$ may satisfy
$r_{\mathrm{in}}(a)=r_{\mathrm{in}}(b)$ while having distinct
interior--boundary assortativity profiles
$\boldsymbol{\rho}(a;\mathcal P)\neq \boldsymbol{\rho}(b;\mathcal P)$.
\end{corollary}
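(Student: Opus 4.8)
The plan is to read the statement as an existence (non-identifiability) claim: it suffices to exhibit a single fixed pair $(G,\mathcal P)$ together with two attributes $a,b:V\to\mathbb R$ realizing the asserted behavior. The conceptual engine is Theorem~\ref{thm:decomposition}: formula~\eqref{eq:decomp_corr} writes the scalar $r_{\mathrm{in}}$ as a weighted aggregation of the profile components $\{r_T\}$ together with a between-type mean-shift term, where crucially the structural weights $\pi_T=|E_T|/\sum_{T'}|E_{T'}|$ and the interior/boundary stratification of $E$ are \emph{attribute-independent}: they are fixed once $(G,\mathcal P)$ is fixed. Thus, as soon as at least two intra-group strata are nonempty, the map $\boldsymbol\rho\mapsto r_{\mathrm{in}}$ collapses a genuinely multi-component vector into a single scalar, so its fibers are positive-dimensional. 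The task reduces to realizing two distinct points of one such fiber by honest attributes on an honest graph.

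First I would build $G$ as a node- and edge-disjoint union of two gadgets inside a two-block partition $\mathcal P=\{V_1,V_2\}$, arranged so that one gadget populates the stratum $E_{II}$ and the other populates $E_{IB}$ (undirected case; any two directed strata work analogously). Concretely, the first gadget is a pair of disjoint internal edges within $V_1$ whose four endpoints are all interior, contributing two $II$ edges; the second is a pair of disjoint edges $q_i-s_i$ within $V_1$, where each $q_i$ is interior and each $s_i$ carries a single cross edge into $V_2$ (making $s_i$ boundary), contributing two $IB$ edges. A parallel structural check confirms that every gadget node receives its intended interior/boundary label, independently of $a,b$. Using disjoint edges rather than a path is deliberate: a single (symmetrized) undirected edge forces a degenerate correlation, whereas a matching of two edges admits correlations freely tunable to $\pm 1$, and taking $|E_{II}|=|E_{IB}|$ forces $\pi_{II}=\pi_{IB}$.

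Next I would choose the values. Using the common value set $\{0,2\}$, on the $II$ gadget I make the two edges \emph{concordant} (endpoints of each edge equal), giving $r_{II}=+1$, while on the $IB$ gadget I make the two edges \emph{discordant} (one endpoint $0$, one endpoint $2$), giving $r_{IB}=-1$; both strata then carry the identical endpoint marginal $\{0,0,2,2\}$ with mean $1$. The attribute $b$ is defined by swapping these two value patterns between the gadgets. Because the gadgets are node-disjoint and the marginals are matched, the swap leaves invariant every type-specific mean and standard deviation $\sigma_{x,T},\sigma_{y,T}$, the pooled deviations $\sigma_{x,\mathrm{in}},\sigma_{y,\mathrm{in}}$, and the between-type term (here $\mathrm{Cov}_{\mathrm{between}}=0$, since all type-specific means equal $1$), while also preserving the structural weights $\pi_T$. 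Substituting into~\eqref{eq:decomp_corr}, the numerator depends on $(r_{II},r_{IB})$ only through $\pi_{II}\sigma_x\sigma_y\,r_{II}+\pi_{IB}\sigma_x\sigma_y\,r_{IB}$, i.e.\ only through $r_{II}+r_{IB}$; since the swap sends $(+1,-1)\mapsto(-1,+1)$ and fixes the sum, we get $r_{\mathrm{in}}(a)=r_{\mathrm{in}}(b)$. Meanwhile $\boldsymbol\rho(a)$ and $\boldsymbol\rho(b)$ differ in their $II$ and $IB$ components, so $\boldsymbol\rho(a)\ne\boldsymbol\rho(b)$.

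The main obstacle, and the only point needing care, is the \emph{matching of endpoint marginals} across the two strata: only when the two gadgets present identical multisets of tail- and head-values does the swap leave the normalizations $\sigma_{x,\mathrm{in}},\sigma_{y,\mathrm{in}}$ and the between-type mean-shift term genuinely invariant, which is exactly what pins $r_{\mathrm{in}}$ while the profile vector moves. A secondary bookkeeping point is nondegeneracy: each populated stratum must have enough edges for the within-type variances in~\eqref{eq:decomp_corr} to be positive and for $r_T$ to be set to $\pm1$, which the two-edge matchings secure. Once these are in place the conclusion is immediate, and one could equally certify the construction by direct numerical evaluation on the explicit small instance.
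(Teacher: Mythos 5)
Your proposal is correct, but it proceeds differently from the paper: the paper states this corollary with \emph{no proof at all}, presenting it as an immediate consequence of Theorem~\ref{thm:decomposition} (with the preceding remark sketching the cancellation mechanism --- opposite-signed within-type correlations and the between-type mean-shift term). You instead supply an explicit witness: a two-block partition with two node-disjoint gadgets populating $E_{II}$ and $E_{IB}$, matched endpoint marginals $\{0,0,2,2\}$ in each stratum, and a value swap sending the profile $(r_{II},r_{IB})=(+1,-1)$ to $(-1,+1)$ while fixing every normalization in~\eqref{eq:decomp_corr}, hence fixing $r_{\mathrm{in}}$. This is a genuine strengthening of what the paper offers: the decomposition formula shows only that the aggregation map is many-to-one \emph{in principle}, whereas an existence claim strictly requires realizability --- that two distinct profiles with equal aggregates can actually be induced by honest attributes on an honest graph, where the same values simultaneously determine the profile components, the type-specific standard deviations, and the pooled normalizations. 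Your marginal-matching takes care of exactly this coupling, and your construction checks out (the gadget endpoints do receive the intended interior/boundary labels independently of the attribute; the cross edges $s_i$--$t_i$ land in $E_{BB}$, which is outside $\mathcal T_{\mathrm{in}}$ and so cannot disturb $r_{\mathrm{in}}$). One simplification you could note: since the swap leaves the pooled multiset $S_{\mathrm{in}}$ literally unchanged, $r_{\mathrm{in}}(a)=r_{\mathrm{in}}(b)$ follows by inspection, without routing through~\eqref{eq:decomp_corr} at all; the formula is then just the conceptual explanation of why such a construction must exist.
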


\begin{proof}[Proof of Theorem~\ref{thm:decomposition}]
The proof follows from the law of total covariance applied to the arc-type stratification.
We give the proof in a directed-first way; the undirected case is identical after replacing arcs by edges and using the
three strata $II,IB,BB$ instead of the four directed strata.

Let $E_{\mathrm{in}}$ denote the set of intra-group arcs (relative to the fixed partition), and let
\[
E_{\mathrm{in}}=\bigsqcup_{T\in\mathcal T_{\mathrm{in}}} E_T
\]
be its disjoint decomposition into the interior--boundary strata that are intra-group
(e.g.\ for directed graphs $\mathcal T_{\mathrm{in}}=\{I\!\to\!I,\ I\!\to\!B,\ B\!\to\!I\}$).
Let $w(u,v)\ge 0$ be the arc weights and write
\[
m := \sum_{(u,v)\in E_{\mathrm{in}}} w(u,v), \qquad
m_T := \sum_{(u,v)\in E_T} w(u,v),
\]
and set $\pi_T := m_T/m$.

For each arc $(u,v)\in E_{\mathrm{in}}$, define the endpoint values
\[
X_{uv}:=x(u),\qquad Y_{uv}:=x(v).
\]
Let the global endpoint means be
\[
\mu_X:=\frac{1}{m}\sum_{(u,v)\in E_{\mathrm{in}}} X_{uv},\qquad
\mu_Y:=\frac{1}{m}\sum_{(u,v)\in E_{\mathrm{in}}} Y_{uv}.
\]
For each stratum $T$, define the type-specific means
\[
\mu_{X,T}:=\frac{1}{m_T}\sum_{(u,v)\in E_T} X_{uv},\qquad
\mu_{Y,T}:=\frac{1}{m_T}\sum_{(u,v)\in E_T} Y_{uv},
\]
and type-specific covariance
\[
\mathrm{Cov}_T:=\frac{1}{m_T}\sum_{(u,v)\in E_T}\big(X_{uv}-\mu_{X,T}\big)\big(Y_{uv}-\mu_{Y,T}\big).
\]
Finally, define the global covariance on $E_{\mathrm{in}}$ by
\[
\mathrm{Cov}_{\mathrm{in}}
:=\frac{1}{m}\sum_{(u,v)\in E_{\mathrm{in}}}\big(X_{uv}-\mu_X\big)\big(Y_{uv}-\mu_Y\big).
\]

\medskip
\noindent\textit{Step 1: Express global means as mixtures of type means.}
Since $E_{\mathrm{in}}$ is a disjoint union of the $E_T$,
\begin{equation*}
\begin{aligned}
\mu_X&=\frac{1}{m}\sum_T\sum_{(u,v)\in E_T}X_{uv}\\
      &=\sum_T\frac{m_T}{m}\cdot\frac{1}{m_T}\sum_{(u,v)\in E_T}X_{uv}\\
      &=\sum_T \pi_T\,\mu_{X,T}.
\end{aligned}
\end{equation*}
Similarly, $\mu_Y=\sum_T \pi_T\,\mu_{Y,T}$.

\medskip
\noindent\textit{Step 2: Decompose the global covariance (finite-sample law of total covariance).}
Fix a stratum $T$ and write, for $(u,v)\in E_T$,
\begin{equation*}
\begin{aligned}
X_{uv}-\mu_X &= (X_{uv}-\mu_{X,T})+(\mu_{X,T}-\mu_X),\\
Y_{uv}-\mu_Y &= (Y_{uv}-\mu_{Y,T})+(\mu_{Y,T}-\mu_Y).
\end{aligned}
\end{equation*}
Multiplying and summing over $(u,v)\in E_T$ gives
\begin{align*}
\sum_{(u,v)\in E_T}(X_{uv}&-\mu_X)(Y_{uv}-\mu_Y) =\\
&=\sum_{(u,v)\in E_T}(X_{uv}-\mu_{X,T})(Y_{uv}-\mu_{Y,T}) \\
&\quad +(\mu_{Y,T}-\mu_Y)\sum_{(u,v)\in E_T}(X_{uv}-\mu_{X,T}) \\
&\quad +(\mu_{X,T}-\mu_X)\sum_{(u,v)\in E_T}(Y_{uv}-\mu_{Y,T}) \\
&\quad + m_T(\mu_{X,T}-\mu_X)(\mu_{Y,T}-\mu_Y).
\end{align*}
The two middle sums vanish by definition of $\mu_{X,T}$ and $\mu_{Y,T}$:
\begin{align*}
\sum_{(u,v)\in E_T}(X_{uv}-\mu_{X,T})&=0,\\
\sum_{(u,v)\in E_T}(Y_{uv}-\mu_{Y,T})&=0.
\end{align*}
Therefore,
\begin{align*}
\sum_{(u,v)\in E_T}&(X_{uv}-\mu_X)(Y_{uv}-\mu_Y) = \\
&=\sum_{(u,v)\in E_T}(X_{uv}-\mu_{X,T})(Y_{uv}-\mu_{Y,T})\\
&+m_T(\mu_{X,T}-\mu_X)(\mu_{Y,T}-\mu_Y).
\end{align*}
Now sum this identity over all strata $T$ and divide by $m$:
\begin{align*}
\mathrm{Cov}_{\mathrm{in}}
&=\frac{1}{m}\sum_T\sum_{(u,v)\in E_T}(X_{uv}-\mu_{X,T})(Y_{uv}-\mu_{Y,T})\\
&+\frac{1}{m}\sum_T m_T(\mu_{X,T}-\mu_X)(\mu_{Y,T}-\mu_Y) \\
&=\sum_T \pi_T\,\mathrm{Cov}_T
+\sum_T \pi_T(\mu_{X,T}-\mu_X)(\mu_{Y,T}-\mu_Y).
\end{align*}
This proves \eqref{eq:decomp} with the explicit between-type term
\[
\mathrm{Cov}_{\mathrm{between}}
:=\sum_T \pi_T(\mu_{X,T}-\mu_X)(\mu_{Y,T}-\mu_Y),
\]
\noindent
which vanishes if and only if the type-specific endpoint means coincide with the global means.

\medskip
\noindent\textit{Step 3: Translate the covariance identity into the correlation identity.}
Let $\sigma_{X,T},\sigma_{Y,T}$ denote the type-specific endpoint standard deviations and let
$\sigma_{X,\mathrm{in}},\sigma_{Y,\mathrm{in}}$ denote the global endpoint standard deviations on $E_{\mathrm{in}}$.
Whenever these are positive, define $r_T(a)=\mathrm{Cov}_T/(\sigma_{X,T}\sigma_{Y,T})$ and
$r_{\mathrm{in}}(a)=\mathrm{Cov}_{\mathrm{in}}/(\sigma_{X,\mathrm{in}}\sigma_{Y,\mathrm{in}})$.
Substituting $\mathrm{Cov}_T=\sigma_{X,T}\sigma_{Y,T}r_T(a)$ into the decomposition yields
\[
r_{\mathrm{in}}(a)=
\frac{\sum_T \pi_T\,\sigma_{X,T}\sigma_{Y,T}\,r_T(a)\;+\;\mathrm{Cov}_{\mathrm{between}}}
{\sigma_{X,\mathrm{in}}\sigma_{Y,\mathrm{in}}},
\]
which is \eqref{eq:decomp_corr}. The proof is complete.
\end{proof}

\section{Sign mechanisms for interface-sensitive attributes}
\label{sec:sign_mec}

As will be discussed following Theorem~\ref{thm:sign}, our aim is to formulate sufficient conditions that are empirically checkable. In fact, as we are going to see in the conclusion of the present section, these conditions turn out to be typically satisfied by mediation-type attributes (e.g., betweenness centrality) in partitioned empirical networks-such as communication, interaction, or information networks-with sparse interfaces. In such settings, these conditions reveal that the boundary-to-interior component $r_{B\to I}(a)$ often emerges as the most informative diagnostic.

\begin{definition}[Boundary dominance (mean gap)]
Fix $(G,\mathcal{P})$ and attribute $a$. For a group $ $ with nonempty interior and boundary, define
\begin{equation*}
\begin{aligned}
\mu_B^{(k)}&=\frac{1}{|\mathrm{Bdy}(V_k)|}\sum_{v\in\mathrm{Bdy}(V_k)} a(v),
\\
\mu_I^{(k)}&=\frac{1}{|\mathrm{Int}(V_k)|}\sum_{v\in\mathrm{Int}(V_k)} a(v).
\end{aligned}
\end{equation*}
We say $a$ is \emph{boundary-dominant} on $V_k$ if $\mu_B^{(k)}>\mu_C^{(k)}$.
\end{definition}

The next result is formulated to separate \emph{structure} (partition with a sparse interface, captured empirically by large $Q$ or small conductance-introduced in Section~\ref{sec:directed_conductance_spectral}) from \emph{attribute class}. The key attribute requirement is boundary dominance (as defined above), which is expected for interface-sensitive mediation scores (betweenness-like attributes) when inter-group traffic must pass through boundary nodes.

\begin{theorem}[Sufficient conditions for negative $B\to I$ component]
\label{thm:sign}
Consider a directed graph $G=(V,E)$ and a fixed partition $\mathcal{P}=\{V_1,\dots,V_k\}$.
Let $a:V\to\mathbb{R}$ be a scalar attribute. For each group $V_k$ with nonempty interior and boundary,
let $\mu_B^{(k)}$ and $\mu_I^{(k)}$ denote the means of $a$ over boundary and interior nodes in $V_k$, respectively.
Assume:
\begin{enumerate}[label=(\roman*),leftmargin=*,itemsep=2pt]
\item \textbf{Boundary dominance:} for every $V_k$ with nonempty interior and boundary, $\mu_B^{(k)}>\mu_I^{(k)}$.
\item \textbf{$B\to I$ endpoint mean dominance on $B\to I$ arcs:} for every $k$ with $E_{B\to I}\cap(V_k\times V_k)\neq\varnothing$, the mean of $a$ on boundary tails
across $E_{B\to I}\cap(V_k\times V_k)$ is at least $\mu_B^{(k)}$, and the mean of $a$ on interior heads across the same set is at most $\mu_I^{(k)}$.
\item \textbf{Nondegeneracy:} the endpoint variances on $S_{B\to I}(a)$ are positive.
\item \textbf{Within-group nonpositive covariance:} for every $k$ with $E_{B\to I}\cap(V_k\times V_k)\neq\varnothing$, the empirical covariance defining $r_{B\to I}(a)$ within $V_k$ is nonpositive.
\item \textbf{Between-group mean term nonpositive:}
denoting by $\bar X_k,\bar Y_k$ the empirical means of $a(u)$ (boundary tails) and $a(v)$ (interior heads)
over $(u,v)\in E_{B\to I}\cap(V_k\times V_k)$, and by $\bar X,\bar Y$ the corresponding global means over all
$(u,v)\in E_{B\to I)}$, the between-type covariance term (cf.\ Theorem~\ref{thm:decomposition}) satisfies
\begin{equation*}
\begin{aligned}
\sum_k \pi_k(\bar X_k-\bar X)(\bar Y_k-\bar Y)\le 0,
\\ 
\pi_k=\frac{|E_{B\to I}\cap(V_k\times V_k)|}{|E_{B\to I}|}.
\end{aligned}
\end{equation*}
\end{enumerate}
Assume moreover that at least one of the inequalities in (iv) or (v) is \emph{strict}. Then
\[
r_{B\to I}(a)<0.
\]
\end{theorem}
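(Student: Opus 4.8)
The plan is to reduce the claim to a single sign statement about an empirical covariance and then invoke the finite-sample law of total covariance already established in the proof of Theorem~\ref{thm:decomposition}, but applied with the arc set $E_{B\to I}$ stratified by \emph{group} rather than by interior--boundary type. First I would use nondegeneracy (iii): the endpoint standard deviations $\sigma_X,\sigma_Y$ on the paired sample $S_{B\to I}(a)$ are strictly positive, so that
\[
r_{B\to I}(a)=\frac{\mathrm{Cov}\big(S_{B\to I}(a)\big)}{\sigma_{X}\,\sigma_{Y}}
\]
has the same sign as $\mathrm{Cov}(S_{B\to I}(a))$. It therefore suffices to show this covariance is strictly negative.

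Next I would exploit that every $B\to I$ arc is intra-group, so $E_{B\to I}=\bigsqcup_k\big(E_{B\to I}\cap(V_k\times V_k)\big)$ is a disjoint decomposition indexed by the blocks of $\mathcal P$. The algebraic identity derived in Steps~1--2 of the proof of Theorem~\ref{thm:decomposition} uses only that the sample is a disjoint union of its pieces, so it transfers verbatim to this group-indexed decomposition and yields
\begin{equation*}
\mathrm{Cov}\big(S_{B\to I}(a)\big)
=\sum_k \pi_k\,\mathrm{Cov}_k
+\sum_k \pi_k\,(\bar X_k-\bar X)(\bar Y_k-\bar Y),
\end{equation*}
where $\pi_k=|E_{B\to I}\cap(V_k\times V_k)|/|E_{B\to I}|$, $\mathrm{Cov}_k$ is the within-group empirical covariance of $(a(u),a(v))$ over $(u,v)\in E_{B\to I}\cap(V_k\times V_k)$, and $\bar X_k,\bar Y_k$ (resp.\ $\bar X,\bar Y$) are the group-level (resp.\ global) means of boundary-tail and interior-head values. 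The first sum is the within-group term and the second is exactly the between-group mean-shift term appearing in hypothesis (v).

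The conclusion then follows by bounding the two summands separately. Hypothesis (iv) gives $\mathrm{Cov}_k\le 0$ for every relevant $k$, and since $\pi_k\ge 0$ with $\sum_k\pi_k=1$, the within-group sum is $\le 0$; hypothesis (v) gives directly that the between-group sum is $\le 0$. Because at least one of these inequalities is strict—either some $\mathrm{Cov}_k<0$ carries positive weight $\pi_k$, or the between-group term is strictly negative—the total covariance is strictly negative, and dividing by the positive denominator $\sigma_X\sigma_Y$ gives $r_{B\to I}(a)<0$. The roles of boundary dominance (i) and endpoint mean dominance (ii) are to pin the direction of the per-group mean gaps, forcing $\bar X_k\ge\mu_B^{(k)}>\mu_I^{(k)}\ge\bar Y_k$, i.e.\ boundary tails systematically exceed interior heads; this is the interpretable mechanism that makes the disassortative signs posited in (iv) and (v) the expected ones for mediation-type attributes, even though the formal sign argument rests only on (iii)--(v).

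I expect the only genuinely subtle point to be conceptual rather than computational: there is no analytic difficulty, since the whole argument is an algebraic covariance decomposition followed by sign tracking. The care required is in the \emph{reindexing}—Theorem~\ref{thm:decomposition} is phrased with strata given by interior--boundary types, whereas here the strata are the per-group blocks of $E_{B\to I}$. I would therefore make explicit that the total-covariance identity is purely combinatorial in the choice of disjoint partition and so applies unchanged, and that propagation of strictness only requires a single strictly negative contribution with positive weight. Once this bookkeeping is in place, the result is immediate.
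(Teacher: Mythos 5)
Your proposal is correct and follows essentially the same route as the paper's proof: the same group-indexed finite-sample law of total covariance on $E_{B\to I}$, the same sign bookkeeping from (iv)--(v) with strictness propagation, the same use of (iii) to pass from covariance to correlation, and the same observation that (i)--(ii) serve only to establish the interpretive mean separation $\bar X_k > \bar Y_k$ rather than entering the formal sign argument. The only difference is cosmetic ordering (you invoke nondegeneracy first, the paper last), so nothing further is needed.
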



\begin{remark*} Why are conditions (iv)--(v) explicit? 
Assumptions (i)--(ii) impose a \emph{mean separation} between boundary tails and interior heads on $B\to I$ arcs, but mean separation alone does not force a negative correlation.
Conditions (iv)--(v) make explicit the additional negative-dependence requirements needed for a rigorous sign conclusion.
Both can be empirically checkable from the same grouped $B\to I$ samples used to compute $r_{B\to I}(a)$.
\end{remark*}


\begin{proof}[Proof of Theorem~\ref{thm:sign}]
For each group $V_k$ define the set of intra-group $B\to I$ arcs
\[
E_k:=E_{B\to I}\cap(V_k\times V_k),\qquad m_k:=|E_k|.
\]
Let $M:=|E_{B\to I}|=\sum_k m_k$ and $\pi_k=m_k/M$.

For each $(u,v)\in E_k$ define the paired observations
\[
X_{uv}:=a(u),\qquad Y_{uv}:=a(v),
\]
so $X_{uv}$ is the attribute at the boundary tail and $Y_{uv}$ the attribute at the interior head.
Let the within-group means be
\[
\bar X_k:=\frac{1}{m_k}\sum_{(u,v)\in E_k}X_{uv},\qquad
\bar Y_k:=\frac{1}{m_k}\sum_{(u,v)\in E_k}Y_{uv},
\]
and define the global means over all $B\to I$ arcs and the global covariance:
\[
\bar X:=\frac{1}{M}\sum_{(u,v)\in E_{B\to I}}X_{uv},\qquad
\bar Y:=\frac{1}{M}\sum_{(u,v)\in E_{B\to I}}Y_{uv},
\]
\[
\mathrm{Cov}_{B\to I}:=\frac{1}{M}\sum_{(u,v)\in E_{B\to I}}(X_{uv}-\bar X)(Y_{uv}-\bar Y).
\]

\medskip
\noindent\textit{Step 1 (mean separation inside each group).}
By (ii), for each $k$ with $m_k>0$ we have $\bar X_k\ge \mu_B^{(k)}$ and $\bar Y_k\le \mu_I^{(k)}$.
By (i), $\mu_B^{(k)}>\mu_I^{(k)}$, hence
\begin{equation}
\label{eq:mean_sep_each_group}
\bar X_k>\bar Y_k\qquad\text{for all }k\text{ with }m_k>0.
\end{equation}
This step is not, by itself, sufficient to force a negative correlation, but it identifies the directional bias that is quantified by Steps~2–4; as such, it is consistent with the intended ``boundary-dominant'' mechanism and will be used for interpretation.

\medskip
\noindent\textit{Step 2 (law of total covariance over groups).}
Applying the between-type covariance decomposition from
Theorem~\ref{thm:decomposition} to the $B\to I$ stratum,
grouped by $E_k$ and 
writing $\mathrm{Cov}_{k}$ for the empirical covariance of $(X_{uv},Y_{uv})$ over $(u,v)\in E_k$ yield:
\[
\mathrm{Cov}_{k}:=\frac{1}{m_k}\sum_{(u,v)\in E_k}(X_{uv}-\bar X_k)(Y_{uv}-\bar Y_k).
\]
A direct finite-sample decomposition (the same algebra as in Theorem~\ref{thm:decomposition}) yields
\begin{equation}
\label{eq:cov_decomp_groups}
\mathrm{Cov}_{B\to I}
=\sum_k \pi_k\,\mathrm{Cov}_{k}
+\sum_k \pi_k(\bar X_k-\bar X)(\bar Y_k-\bar Y).
\end{equation}
(For completeness: expand $(X_{uv}-\bar X)(Y_{uv}-\bar Y)$ as
$[(X_{uv}-\bar X_k)+(\bar X_k-\bar X)]\cdot[(Y_{uv}-\bar Y_k)+(\bar Y_k-\bar Y)]$,
sum over $E_k$, and use $\sum_{(u,v)\in E_k}(X_{uv}-\bar X_k)=\sum_{(u,v)\in E_k}(Y_{uv}-\bar Y_k)=0$.)

\medskip
\noindent\textit{Step 3 (sign of the covariance).}
Assumption (iv) gives $\mathrm{Cov}_{k}\le 0$ for every $k$ with $m_k>0$, hence
\[
\sum_k \pi_k\,\mathrm{Cov}_{k}\le 0.
\]
Assumption (v) states that the between-group mean term satisfies
\[
\sum_k \pi_k(\bar X_k-\bar X)(\bar Y_k-\bar Y)\le 0.
\]
Therefore \eqref{eq:cov_decomp_groups} implies $\mathrm{Cov}_{B\to I}\le 0$.
If at least one of the inequalities in (iv) or (v) is strict, then $\mathrm{Cov}_{B\to I}<0$.

\medskip
\noindent\textit{Step 4 (from covariance to correlation).}
Let
\begin{equation*}
\begin{aligned}
\sigma_X^2&:=\frac{1}{M}\sum_{(u,v)\in E_{B\to I}}(X_{uv}-\bar X)^2,\\
\sigma_Y^2&:=\frac{1}{M}\sum_{(u,v)\in E_{B\to I}}(Y_{uv}-\bar Y)^2.
\end{aligned}
\end{equation*}
By nondegeneracy (iii), $\sigma_X\sigma_Y>0$. Hence the Pearson correlation on the $B\to I$ sample,
\[
r_{B\to I}(a)=\frac{\mathrm{Cov}_{B\to I}}{\sigma_X\sigma_Y},
\]
satisfies $r_{B\to I}(a)<0$ whenever $\mathrm{Cov}_{B\to I}<0$, which we established above.
\end{proof}

\subsection*{Betweenness-type attributes and boundary dominance}
Shortest-path betweenness measures how often a node lies on shortest paths, while random-walk/current-flow betweenness measures expected traversal under a random-walk flow model~\cites{Freeman1977,Brandes2001,Newman2005RandomWalk}. In partitioned networks with sparse inter-group connectivity, inter-group paths/flows are forced through boundary nodes. Under such conditions, boundary nodes accumulate larger mediation load, yielding boundary dominance and thus (by Theorem~\ref{thm:sign}) negative $B\to I$ assortative component. While not pursued in the present paper, a fully model-based proof of this claim can be easily developed under block-interface assumptions (dense within-group connectivity plus sparse inter-group arcs incident to a small boundary set). 

\section{Directed conductance and directed spectral tools}
\label{sec:directed_conductance_spectral}

Throughout this subsection $G=(V,E)$ is a directed (possibly weighted) graph with nonnegative arc weights $w_{uv}\ge 0$.
The undirected case is recovered as a special case; see Section~\ref{sec:spectral_proxy_projection}.

\begin{definition}[Directed random walk and stationary distribution]
\label{def:dir_rw_stationary}
Let $w_{uv}\ge 0$ be arc weights and define the row-stochastic transition matrix
\[
P(u,v)=
\begin{cases}
\dfrac{w_{uv}}{\sum_{z} w_{uz}}, & \text{if }\sum_{z} w_{uz}>0,\\[1ex]
0, & \text{otherwise.}
\end{cases}
\]
Assume $P$ admits a stationary distribution $\varphi$ (a probability vector on $V$) satisfying
\[
\varphi^\top P=\varphi^\top,\qquad \sum_{v\in V}\varphi(v)=1.
\]
Let $\Phi=\mathrm{diag}(\varphi)$.
\end{definition}

\begin{definition}[Stationary flow and directed conductance~\cites{LevinPeresWilmer2017,Chung2005Directed}]
\label{def:directed_conductance}
Define the stationary flow
\[
F_{\varphi}(u,v)=\varphi(u)\,P(u,v).
\]
For a nonempty proper set $S\subset V$, define the out-boundary
\[
\partial^+ S=\{(u,v)\in E:\ u\in S,\ v\notin S\},
\]
the boundary flow
\[
F_{\varphi}(\partial^+ S)=\sum_{(u,v)\in \partial^+ S}F_{\varphi}(u,v),
\]
and the stationary mass $\varphi(S)=\sum_{u\in S}\varphi(u)$.
The \emph{directed conductance} of $S$ is
\[
\Phi(S)=\frac{F_{\varphi}(\partial^+ S)}{\min\{\varphi(S),\varphi(V\setminus S)\}}.
\]
The \emph{directed Cheeger constant}~\cite{Chung2005Directed} is
\[
h(G)=\inf_{\varnothing\subset S\subset V}\Phi(S).
\]
\end{definition}

\begin{remark*}[Ergodicity and practical remedies]
\label{rem:ergodicity_dir}
If $G$ is strongly connected and aperiodic (equivalently, the associated finite-state Markov chain is irreducible and aperiodic),
then the stationary distribution $\varphi$ exists and is unique; moreover $P^t$ converges to $\varphi$ as $t\to\infty$
(see, e.g., \cite[Ch.~1]{LevinPeresWilmer2017}).
If these conditions fail, a standard remedy is to use a \emph{lazy} walk, e.g.\ $P_{\mathrm{lazy}}=\tfrac12(I+P)$, which removes
periodicity (see \cite[Sec.~1.2]{LevinPeresWilmer2017}),
or a \emph{teleporting} walk (PageRank-type) $P_{\alpha}=\alpha P+(1-\alpha)\mathbf{1}\pi_0^\top$,
which enforces irreducibility/aperiodicity under mild choices of $\pi_0$ (see \cites{BrinPage1998,LangvilleMeyer2006}).
All conductance and directed spectral quantities are then interpreted with respect to the modified walk.
\end{remark*}

\begin{definition}[Chung's directed Laplacian~\cite{Chung2005Directed}]
\label{def:chung_directed_laplacian}
With $P$ and $\Phi$ as above, define the (symmetric) directed normalized Laplacian
\[
L \;=\; I-\frac{1}{2}\Big(\Phi^{1/2}P\Phi^{-1/2}+\Phi^{-1/2}P^{\top}\Phi^{1/2}\Big).
\]
\end{definition}

\begin{theorem}[Directed Cheeger inequality (Chung)~\cite{Chung2005Directed}]
\label{thm:directed_cheeger}
Let $L$ be as in Definition~\ref{def:chung_directed_laplacian}.
Let $0=\lambda_1\le \lambda_2\le\cdots\le\lambda_n$ be its eigenvalues.
Then the directed Cheeger constant satisfies
\[
\frac{h(G)^2}{2}\ \le\ \lambda_2\ \le\ 2\,h(G).
\]
\end{theorem}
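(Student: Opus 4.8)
The plan is to reduce the directed Cheeger inequality to the classical undirected one by recognizing $L$ as the normalized Laplacian of a symmetrized weighted graph. First I would introduce the \emph{stationary symmetrization} of $G$: the undirected weighted graph $\tilde G$ on the same vertex set with symmetric weights $\tilde w_{uv} = \tfrac12\big(\varphi(u)P(u,v) + \varphi(v)P(v,u)\big) = \tfrac12\big(F_\varphi(u,v)+F_\varphi(v,u)\big)$. The first goal is to verify that the weighted degree of $u$ in $\tilde G$ is exactly $\varphi(u)$: since $P$ is row-stochastic one has $\sum_v \varphi(u)P(u,v) = \varphi(u)$, and since $\varphi^\top P = \varphi^\top$ one has $\sum_v \varphi(v)P(v,u) = \varphi(u)$, so $\tilde d_u = \varphi(u)$ and hence $\tilde D = \Phi$.

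With $\tilde D = \Phi$ in hand, a direct computation shows that the symmetric normalized Laplacian of $\tilde G$, namely $\tilde L = I - \tilde D^{-1/2}\tilde W \tilde D^{-1/2}$ with $\tilde W = \tfrac12(\Phi P + P^\top \Phi)$, equals Chung's directed Laplacian: conjugating gives $\Phi^{-1/2}\cdot\tfrac12(\Phi P + P^\top\Phi)\cdot\Phi^{-1/2} = \tfrac12(\Phi^{1/2}P\Phi^{-1/2} + \Phi^{-1/2}P^\top\Phi^{1/2})$, so $\tilde L = L$. In particular the eigenvalues $\lambda_1,\dots,\lambda_n$ of $L$ are precisely those of the undirected normalized Laplacian $\tilde L$, and $\lambda_2$ is its spectral gap.

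Next I would match the two conductances. For a cut $S$ one computes the $\tilde G$-cut weight $\tilde w(\partial S) = \tfrac12\big(F_\varphi(\partial^+ S) + F_\varphi(\partial^+(V\setminus S))\big)$ and the volume $\sum_{u\in S}\tilde d_u = \varphi(S)$. The crucial identity is flow conservation across the cut: stationarity ($\varphi^\top P = \varphi^\top$) forces the in-flow to each vertex to equal its out-flow, and summing this balance over $v\in S$ and cancelling the internal flow yields $F_\varphi(\partial^+(V\setminus S)) = F_\varphi(\partial^+ S)$. Hence $\tilde w(\partial S) = F_\varphi(\partial^+ S)$, so the undirected conductance of $\tilde G$ coincides termwise with the directed conductance of $G$ over every cut, giving $\tilde h(\tilde G) = h(G)$ exactly.

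Finally, applying the classical undirected Cheeger inequality $\tilde h^2/2 \le \lambda_2(\tilde L) \le 2\tilde h$ to $\tilde G$ and substituting $\tilde L = L$ together with $\tilde h(\tilde G) = h(G)$ produces the stated bounds. I expect the main obstacle to be the hard lower-bound direction $h(G)^2/2 \le \lambda_2$ of the undirected inequality, which is not formal: it requires the standard sweep argument, taking a Fiedler vector for $\lambda_2$, ordering vertices by its entries, and showing some threshold cut has conductance at most $\sqrt{2\lambda_2}$ via a Cauchy--Schwarz estimate on the Dirichlet form. The easy direction $\lambda_2 \le 2h(G)$ follows by feeding a suitably centered indicator of the optimal cut into the Rayleigh quotient of $L$. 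Since this hard direction is exactly the content of the cited undirected theorem, the genuinely new work is confined to the two identities $\tilde L = L$ and $\tilde h(\tilde G) = h(G)$, both of which hinge on the stationarity relation $\varphi^\top P = \varphi^\top$.
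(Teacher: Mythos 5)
The paper offers no proof of this theorem---it is quoted directly from Chung (2005)---and your blind proof is correct: the stationary symmetrization $\tilde w_{uv}=\tfrac12\big(F_\varphi(u,v)+F_\varphi(v,u)\big)$, the identities $\tilde D=\Phi$ and $\tilde L=L$, the flow-conservation identity $F_\varphi(\partial^+ S)=F_\varphi(\partial^+(V\setminus S))$ derived from $\varphi^\top P=\varphi^\top$, and the resulting equality $\tilde h(\tilde G)=h(G)$ constitute exactly the reduction by which Chung's paper imports the classical undirected Cheeger inequality, so your route coincides with the cited source's. The only points worth making explicit are the standing assumption $\varphi>0$ (guaranteed by strong connectivity, and needed for $\Phi^{-1/2}$ to exist) and the harmless presence of self-loops in $\tilde G$ when $P(u,u)>0$, which contribute to volumes but not to cuts and affect neither direction of the undirected inequality.
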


For a fixed partition $\mathcal{P}=\{V_1,\dots,V_k\}$, the following quantity summarizes the worst-separated block of the partition
\[
\Phi_{\max}(\mathcal{P})=\max_{k}\Phi(V_k),
\]
where $\Phi(V_k)$ is computed from Definition~\ref{def:directed_conductance}.


Directed conductance provides a structural mechanism that explains the sign of
interior--boundary assortativity components for mediation-like attributes.
If a group $V_k$ has small directed conductance $\Phi(V_k)$, then inter-group
flow is forced through a small interface, concentrating traffic on boundary
nodes.
For mediation-sensitive attributes (e.g.\ shortest-path or random-walk
betweenness), this induces \emph{boundary dominance},
$\mu_B^{(k)}>\mu_I^{(k)}$.

When this dominance is combined with sparse $B\!\to\!I$ connectivity and no
systematic positive covariance across $B\!\to\!I$ edges, the sufficient
conditions of Theorem~\ref{thm:sign} apply, yielding a negative
boundary-to-interior assortativity component:
\[
r_{B\to I}(a)<0.
\]
Hence, the sign of $r_{B\to I}(a)$ can be interpreted as an interface-sensitive
signature of low conductance acting through mediation-like attributes.

\section{Undirected spectral proxy} 
\label{sec:spectral_proxy_projection}

If $G$ is undirected with nonnegative weights, then $\varphi(v)\propto d_v$ (where $d_v$ is the (weighted) degree of node $v$) and the directed conductance $\Phi(S)$ reduces to the classical undirected conductance~\cites{Dodziuk1984,AlonMilman1985,Chung1997Spectral} 
\[
\phi(S)=\frac{|\partial S|}{\min\{\mathrm{vol}(S),\mathrm{vol}(V\setminus S)\}},
\]
and $L$ reduces to the standard normalized Laplacian $L_{\mathrm{norm}}=I-D^{-1/2}AD^{-1/2}$.

Next, we are going to discuss in this section 
a proxy for an \emph{undirected} Laplacian (equivalently, a symmetric positive semidefinite operator), and in the same setting we are going to prove in the next proposition the existence of a truncation bound for this proxy. On the other side, for directed graphs, one may use this proxy only as an \emph{empirical feature} by evaluating it on an undirected projection of the directed graph. The directed spectral quantities introduced earlier (directed conductance and Chung's directed Laplacian) are conceptually separate and cannot be used to justify the truncation bound (to be given below).

As a matter of fact, spectral graph theory provides a principled approximation toolkit for certain mediation-like attributes.
For connected undirected graphs, the Laplacian pseudoinverse $L^{+}$ underlies effective resistance and
current-flow/random-walk notions of betweenness~\cite{Newman2005RandomWalk}. This motivates using
spectral-diagonal proxies built from eigenpairs of the normalized Laplacian.

\begin{proposition}
Let $L_{\mathrm{norm}}$ be the normalized Laplacian of an undirected graph with eigenpairs
$\{(\lambda_i,u_i)\}_{i=1}^n$ ordered as $0=\lambda_1<\lambda_2\le\cdots\le\lambda_n$.
Define the truncated proxy (a truncated diagonal of the Laplacian pseudoinverse)
\[
s_k(v)=\sum_{i=2}^{k+1}\frac{u_i(v)^2}{\lambda_i}.
\]
The quantity $s_k(v)$ is the diagonal of a truncated spectral filter and can be computed from the first
$k$ nontrivial eigenvectors, making it scalable. The following sharp tail bound holds
\[
0\le s_\infty(v)-s_k(v)\le \frac{1}{\lambda_{k+2}},
\]
which justifies truncation as a controlled surrogate feature.
\end{proposition}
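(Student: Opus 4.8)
The plan is to recognize that $s_\infty(v) - s_k(v)$ is simply the tail of a convergent (here, finite) spectral sum, so the entire argument reduces to two elementary spectral facts: nonnegativity of the summands and completeness of the eigenbasis. Since $G$ is finite with $n$ nodes, $s_\infty(v) = \sum_{i=2}^{n} u_i(v)^2/\lambda_i$, and therefore
\[
s_\infty(v) - s_k(v) = \sum_{i=k+2}^{n} \frac{u_i(v)^2}{\lambda_i}.
\]
First I would establish the lower bound, which is immediate: each summand $u_i(v)^2/\lambda_i$ is nonnegative because $u_i(v)^2 \ge 0$ and $\lambda_i > 0$ for all $i \ge 2$ (the graph is connected, so $0 = \lambda_1 < \lambda_2$). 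Hence the tail is a sum of nonnegative terms and $s_\infty(v) - s_k(v) \ge 0$.

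For the upper bound I would exploit the monotone ordering of the eigenvalues. Since $\lambda_i \ge \lambda_{k+2}$ for every $i \ge k+2$, we have $1/\lambda_i \le 1/\lambda_{k+2}$ on the whole tail, which lets me factor out the worst (smallest) eigenvalue:
\[
\sum_{i=k+2}^{n} \frac{u_i(v)^2}{\lambda_i} \le \frac{1}{\lambda_{k+2}} \sum_{i=k+2}^{n} u_i(v)^2.
\]
The final step is to control the residual mass $\sum_{i=k+2}^{n} u_i(v)^2$. Here I would invoke that the eigenvectors $\{u_i\}_{i=1}^{n}$ of the symmetric matrix $L_{\mathrm{norm}}$ form an orthonormal basis of $\mathbb{R}^n$. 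Expanding the standard basis vector $e_v$ in this basis gives $e_v = \sum_{i=1}^{n} u_i(v)\,u_i$, and Parseval's identity yields $\sum_{i=1}^{n} u_i(v)^2 = \|e_v\|^2 = 1$. Since the tail sum is a partial sum of nonnegative terms from this total, $\sum_{i=k+2}^{n} u_i(v)^2 \le 1$, and combining with the previous display produces $s_\infty(v) - s_k(v) \le 1/\lambda_{k+2}$, as claimed.

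The argument presents no genuine obstacle; it is a two-line spectral estimate once the right decomposition is in place. The only point requiring slight care is the completeness/Parseval step: one must use that $L_{\mathrm{norm}}$ is symmetric (ensuring a genuine orthonormal eigenbasis) so that the per-coordinate squared components sum to one, rather than merely being bounded by the spectral norm. This is exactly what makes the constant $1/\lambda_{k+2}$ \emph{sharp}: the bound is attained in the limiting regime where $e_v$ concentrates its residual mass entirely on the eigenvector $u_{k+2}$ associated with the smallest retained-tail eigenvalue. I would remark that the same proof transfers verbatim to the pseudoinverse interpretation, since $s_\infty(v)$ is precisely the $(v,v)$ diagonal entry of $L_{\mathrm{norm}}^{+}$ restricted to the range orthogonal to the trivial eigenvector.
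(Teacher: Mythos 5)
Your proof is correct and follows essentially the same route as the paper's: isolate the tail $\sum_{i=k+2}^{n} u_i(v)^2/\lambda_i$, bound each $1/\lambda_i$ by $1/\lambda_{k+2}$ using the eigenvalue ordering, and control the residual mass via the orthonormality identity $\sum_{i=1}^n u_i(v)^2 = 1$. Your added remarks on Parseval, sharpness, and the pseudoinverse interpretation are sound elaborations but do not change the argument.
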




\begin{proof}[Proof of the tail bound for the truncated spectral proxy]
Let $L_{\mathrm{norm}}=U\Lambda U^\top$ be the normalized Laplacian eigen-decomposition with orthonormal eigenvectors $U=[u_1,\dots,u_n]$ and eigenvalues $0=\lambda_1<\lambda_2\le\cdots\le\lambda_n$. Define
\[
s_\infty(v)=\sum_{i=2}^{n}\frac{u_i(v)^2}{\lambda_i},
\qquad
s_k(v)=\sum_{i=2}^{k+1}\frac{u_i(v)^2}{\lambda_i}.
\]
Then
\begin{equation*}
\begin{aligned}
0\le s_\infty(v)-s_k(v)&=\sum_{i=k+2}^{n}\frac{u_i(v)^2}{\lambda_i}\\
&\le \frac{1}{\lambda_{k+2}}\sum_{i=k+2}^{n}u_i(v)^2
\le \frac{1}{\lambda_{k+2}},
\end{aligned}
\end{equation*}
since $\sum_{i=1}^n u_i(v)^2=1$ for each $v$.
\end{proof}

\begin{remark*} 
Proposition~7.1 is included as an auxiliary construction.
It is stated for an undirected Laplacian and is proposed primarily as a potentially computational proxy for mediation-type attributes.
It plays no role in the directed SIS dynamical results that follow.
\end{remark*}

\section{SIS epidemic dynamics and boundary dominance}
\label{sec:SIS}

Let $G=(V,E)$ be a directed graph with $|V|=n$ and weighted adjacency matrix
$A=(A_{ji})$, where $A_{ji}\ge 0$ represents transmission from node $j$ to node $i$.
Following the $N$-intertwined SIS framework of \citeauthor{VanMieghem2011} (\citeyear{VanMieghem2011}), 
we consider the continuous-time susceptible--infected--susceptible (SIS) dynamics
\begin{equation}
\label{eq:SIS}
\dot x_i
= -\delta x_i + (1-x_i)\beta\sum_{j=1}^n A_{ji}x_j,
\qquad i=1,\dots,n,
\end{equation}
where $x_i(t)\in[0,1]$ denotes the infection probability of node $i$ at time $t$,
$\beta>0$ is the infection rate, and $\delta>0$ is the recovery rate.

The interaction between network assortativity and epidemic spreading has been studied extensively, particularly for SIS-type dynamics.
Early work by Newman showed how degree–degree mixing patterns affect epidemic thresholds and prevalence by shaping the effective transmission pathways in networks \cites{Newman2002Assortative,Newman2003Mixing}.
Subsequent studies in the epidemic literature have developed this perspective further by treating assortativity (typically degree assortativity or block-level mixing matrices) as a fixed structural property that modulates the stability of disease-free and endemic equilibria. For example, prior work has investigated how network degree assortativity affects SIS epidemic thresholds and persistence \cites{PastorSatorrasVespignani2001, YangTangLai2015, KorngutAssaf2025}, often showing that correlations in contact structure can alter epidemic behavior.

The present work adopts a complementary and conceptually distinct viewpoint.
Rather than asking how a prescribed assortativity pattern influences SIS dynamics, we study how SIS dynamics themselves generate node-level attributes—such as endemic infection probabilities or linearized susceptibility weights—whose \emph{interior–boundary assortativity profiles} encode information about network partition structure.
In this sense, assortativity is treated here as an \emph{emergent diagnostic of the dynamics}, not as an exogenous network descriptor.
To the best of our knowledge, this endogenous use of assortativity profiles in conjunction with SIS dynamics, and their rigorous connection to boundary dominance and sign structure, has not been previously formalized.

The following results prove well-posedness and existence of endemic equilibrium.
\begin{proposition}[Positive invariance]
\label{prop:SIS_invariant}
System~\eqref{eq:SIS} leaves the hypercube $[0,1]^n$ forward invariant.
\end{proposition}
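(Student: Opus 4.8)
The plan is to apply Nagumo's subtangentiality criterion for forward invariance of a closed convex set, here the compact box $[0,1]^n$. First I would observe that the right-hand side of \eqref{eq:SIS} is a polynomial in $(x_1,\dots,x_n)$, hence $C^\infty$ and in particular locally Lipschitz on all of $\mathbb{R}^n$; by the Picard--Lindel\"of theorem this yields local existence and uniqueness of solutions through any initial point. Global forward existence then follows once invariance is established, since trajectories confined to the compact set $[0,1]^n$ cannot blow up in finite time.

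The core of the argument is the boundary sign check. The boundary of $[0,1]^n$ is the union of the facets $\{x_i=0\}$ and $\{x_i=1\}$ intersected with the box, and the inward-pointing condition for a box factorizes coordinatewise, because the tangent cone to $[0,1]^n$ at any boundary point is the Cartesian product of the one-dimensional tangent cones. Concretely I would verify: on the facet $x_i=0$ (with the remaining coordinates arbitrary in $[0,1]$), the term $-\delta x_i$ vanishes and the transmission term satisfies $(1-x_i)\beta\sum_j A_{ji}x_j=\beta\sum_j A_{ji}x_j\ge 0$, using $A_{ji}\ge 0$ and $x_j\ge 0$; hence $\dot x_i\ge 0$, so the field does not push $x_i$ below $0$. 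Symmetrically, on the facet $x_i=1$ the factor $(1-x_i)$ annihilates the transmission term and leaves $\dot x_i=-\delta<0$, so the field does not push $x_i$ above $1$.

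Having checked that each component of the vector field points into (or tangent to) $[0,1]^n$ along the corresponding facet, I would invoke Nagumo's theorem to conclude that $[0,1]^n$ is forward invariant under \eqref{eq:SIS}; combined with the local existence above, this produces global-in-forward-time solutions that remain in the unit hypercube.

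The main obstacle---really a point requiring care rather than a genuine difficulty---is the behavior at lower-dimensional faces (edges and corners) where several coordinates simultaneously attain $0$ or $1$, since a naive facet-by-facet argument does not on its own rule out escape through a corner. This is exactly what the tangent-cone formulation of Nagumo's criterion resolves: because the box is a product of intervals, its tangent cone at every boundary point is the product of the one-dimensional tangent cones, so the inward condition is equivalent to the coordinatewise sign conditions verified above, with no extra constraints imposed at corners. If one prefers to avoid citing Nagumo, an elementary alternative is to consider the first time a trajectory meets the boundary together with the coordinate attaining it, and to use the sign of $\dot x_i$ there to reach a contradiction (optionally after an $\varepsilon$-shrinking of the box to obtain strict inequalities); this involves more bookkeeping but no new ideas.
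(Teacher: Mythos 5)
Your proposal is correct and follows essentially the same route as the paper: the paper's proof is exactly the coordinatewise sign check ($\dot x_i=\beta\sum_j A_{ji}x_j\ge 0$ on $\{x_i=0\}$ and $\dot x_i=-\delta<0$ on $\{x_i=1\}$), concluding that the field points inward on the boundary. What you add---local Lipschitz existence via Picard--Lindel\"of, the explicit appeal to Nagumo's tangent-cone criterion, and the care about lower-dimensional faces---is the formal scaffolding the paper leaves implicit, so your write-up is a more complete version of the same argument rather than a different one.
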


\begin{proof}
If $x_i=0$, then $\dot x_i=\beta\sum_j A_{ji}x_j\ge0$.
If $x_i=1$, then $\dot x_i=-\delta<0$.
Hence the vector field points inward on the boundary of $[0,1]^n$.
\end{proof}

The next Theorem~\ref{thm:SIS_endemic} shows the existence of an equilibrium which defines an endogenous nodal attribute generated by nonlinear dynamics.
By $\rho(A)$ we denote the spectral radius of $A$.

\begin{theorem}[Endemic equilibrium]
\label{thm:SIS_endemic}
If $\beta/\delta > 1/\rho(A)$, system~\eqref{eq:SIS} admits a unique endemic equilibrium
$x^*\in(0,1)^n$, which is globally asymptotically stable on $(0,1]^n$.
\end{theorem}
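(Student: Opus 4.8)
The plan is to treat \eqref{eq:SIS} as a monotone (cooperative) dynamical system and combine the instability of the disease-free state with a subhomogeneity (strict concavity) argument, in the spirit of the classical Lajmanovich--Yorke scheme. Throughout I assume, consistently with the standing connectivity hypotheses, that $A$ is irreducible (i.e.\ $G$ is strongly connected), so that Perron--Frobenius applies. First I would record that on the invariant cube $[0,1]^n$ (Proposition~\ref{prop:SIS_invariant}) the off-diagonal Jacobian entries $\partial f_i/\partial x_j=\beta(1-x_i)A_{ji}\ge 0$ are nonnegative, so the vector field is cooperative and the induced flow $\phi_t$ is order-preserving (Kamke--M\"uller). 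I would also rewrite the equilibrium equations $\delta x_i=(1-x_i)\beta (A^\top x)_i$ as a fixed-point problem $x=F(x)$ with $F_i(x)=\beta(A^\top x)_i/(\delta+\beta(A^\top x)_i)$, noting that $F$ is monotone, maps the positive cone into $[0,1)^n$, and is \emph{strictly subhomogeneous}: for $0<\lambda<1$ and $x>0$ one checks directly that $F_i(\lambda x)>\lambda F_i(x)$, since $\delta+\beta\lambda(A^\top x)_i<\delta+\beta(A^\top x)_i$.

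For existence I would linearize at the origin: $Df(0)=-\delta I+\beta A^\top$ has spectral abscissa $-\delta+\beta\rho(A)$, which is strictly positive exactly under the hypothesis $\beta/\delta>1/\rho(A)$, with Perron eigenvector $v>0$ of $A^\top$. A short computation gives $f_i(\varepsilon v)=\varepsilon v_i\{(-\delta+\beta\rho(A))-\varepsilon\beta\rho(A)v_i\}$, which is strictly positive in every coordinate once $\varepsilon$ is small, so $\varepsilon v$ is a strict subsolution. Monotonicity then forces the trajectory $\phi_t(\varepsilon v)$ to be nondecreasing and bounded above by the all-ones vector $\mathbf 1=(1,\dots,1)^\top$, hence to converge to an equilibrium $x^*\ge\varepsilon v>0$, which the fixed-point form places in $(0,1)^n$.

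Uniqueness is the crux, and I would obtain it from strict subhomogeneity by a scaling (part-metric) argument. Suppose $\tilde x$ is a second positive equilibrium and set $c=\max_i(\tilde x_i/x^*_i)$, attained at some index $i_0$. If $c>1$, then monotonicity gives $\tilde x=F(\tilde x)\le F(c\,x^*)$, while strict subhomogeneity (in the equivalent form $F(\mu x)<\mu F(x)$ for $\mu>1$) gives $F(c\,x^*)<c\,F(x^*)=c\,x^*$ coordinatewise; evaluating at $i_0$ yields $\tilde x_{i_0}<c\,x^*_{i_0}$, contradicting $\tilde x_{i_0}=c\,x^*_{i_0}$. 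Hence $c\le 1$, i.e.\ $\tilde x\le x^*$, and the symmetric argument gives $x^*\le\tilde x$, so $x^*=\tilde x$. Here irreducibility is what guarantees $(A^\top x)_i>0$ for $x>0$, and thus the strict coordinatewise inequalities on which the contradiction rests.

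Finally, for global asymptotic stability on $(0,1]^n$ I would sandwich an arbitrary trajectory between two monotone ones converging to $x^*$. The lower trajectory $\phi_t(\varepsilon v)$ increases to $x^*$ as above. For the upper trajectory, note $f_i(\mathbf 1)=-\delta<0$, so $\mathbf 1$ is a strict supersolution and $\phi_t(\mathbf 1)$ is nonincreasing; since $\mathbf 1\ge x^*$, order-preservation gives $\phi_t(\mathbf 1)\ge\phi_t(x^*)=x^*$, so this decreasing bounded trajectory converges to an equilibrium in $[x^*,\mathbf 1]$, which is $x^*$ by uniqueness. Given any $z\in(0,1]^n$, strict positivity lets me choose $\varepsilon$ with $\varepsilon v\le z\le\mathbf 1$, and monotonicity yields $\phi_t(\varepsilon v)\le\phi_t(z)\le\phi_t(\mathbf 1)$ for all $t$; letting $t\to\infty$ squeezes $\phi_t(z)\to x^*$. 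The main obstacle is the uniqueness step: positivity and mean information alone do not pin down the equilibrium, and it is the strict subhomogeneity of $F$ together with irreducibility that upgrades monotone comparison into a genuine contradiction; once uniqueness is secured, both existence and global convergence follow from the cooperative sandwich.
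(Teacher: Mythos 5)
Your proposal is correct and takes essentially the same approach as the paper, whose proof merely invokes cooperativity, irreducibility, the threshold condition, and ``monotonicity arguments'' with citations to the literature. You have simply written out in full the standard Lajmanovich--Yorke-style details that the paper delegates to references: the Perron-eigenvector subsolution for existence, strict subhomogeneity of the fixed-point map for uniqueness, and the monotone sandwich between $\phi_t(\varepsilon v)$ and $\phi_t(\mathbf 1)$ for global convergence.
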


\begin{proof}
System~\eqref{eq:SIS} is cooperative and irreducible (i.e., its Jacobian has nonnegative off-diagonal entries and cannot be put into block upper-triangular form by any simultaneous permutation of coordinates) whenever $A$ is strongly connected.
The threshold condition $\beta/\delta > 1/\rho(A)$ guarantees persistence and uniqueness of a positive equilibrium; global stability follows from monotonicity
arguments. See \cites{Newman2010Networks,VanMieghem2011}.
\end{proof}

Henceforth we identify and fix the equilibrium attribute as the endemic equilibrium $x^*$, i.e., 
\[
a(i):=x_i^*,
\]
which is interpreted as a scalar nodal attribute induced by the epidemic dynamics.


Next, let us fix a partition $\mathcal P=\{V_1,\dots,V_K\}$ of $V$.
For e{}ach group $V_k$, define interior and boundary nodes as by relations~\eqref{eq:I_Bd}.

\begin{definition}[Boundary dominance]
\label{def:boundary_dominance_SIS}
The SIS equilibrium attribute $a(i)=x_i^*$ is \emph{boundary-dominant} on $V_k$ if
\[
\frac{1}{|\mathrm{Bdy}(V_k)|}\sum_{v\in\mathrm{Bdy}(V_k)} x_v^*
\;>\;
\frac{1}{|\mathrm{Int}(V_k)|}\sum_{v\in\mathrm{Int}(V_k)} x_v^*.
\]
\end{definition}


Let $\varphi$ be the stationary distribution of the random walk associated with $A$,
and let $\Phi(V_k)$ denote the directed conductance of $V_k$.
The following result (Theorem~\ref{thm:SIS_boundary}) formalizes boundary dominance as a dynamical consequence of low directed conductance.

\begin{theorem}[Boundary amplification under low conductance]
\label{thm:SIS_boundary}
Assume that, for each $k$, 
there exists $\varepsilon_k>0$ such that
\[
\Phi(V_\ell)\le \varepsilon_k,
\]
where $\varepsilon_k$ is sufficiently small relative to the ratio $\beta/\delta$
and the internal spectral gap of the induced subgraph on $V_\ell$.
Then the endemic SIS equilibrium $x^*$ is boundary-dominant on every $V_k$ with
nonempty interior and boundary.
\end{theorem}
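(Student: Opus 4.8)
The plan is to treat the inter-group coupling as a perturbation of the decoupled dynamics and to show that, at first order, the only source of a systematic boundary--interior gap is a nonnegative injection supported on the boundary. First I would rewrite the endemic equilibrium from Theorem~\ref{thm:SIS_endemic} in fixed-point form: setting $g(\theta)=\beta\theta/(\delta+\beta\theta)$ and $\Theta_i=\sum_j A_{ji}x_j^*$, the equilibrium identity $\delta x_i^*=(1-x_i^*)\beta\Theta_i$ is equivalent to $x_i^*=g(\Theta_i)$, and $g$ is strictly increasing with $g'>0$. Thus comparing infection probabilities reduces to comparing the nodal pressures $\Theta_i$, and it suffices to show that the boundary-averaged pressure strictly exceeds the interior-averaged pressure on each $V_k$.

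Next I would split the adjacency as $A=A^{\mathrm{in}}+A^{\mathrm{cross}}$, where $A^{\mathrm{in}}$ keeps only intra-block arcs and $A^{\mathrm{cross}}$ the inter-block arcs. By the definition of boundary in relations~\eqref{eq:I_Bd}, every inter-block arc is incident to a boundary node, so the cross-group pressure $b_i:=\sum_{j\notin V_k}A_{ji}x_j^*$ is strictly positive on boundary nodes receiving inter-block arcs and vanishes on interior nodes. The directed-conductance hypothesis $\Phi(V_\ell)\le\varepsilon_\ell$ bounds the stationary boundary flow $F_\varphi(\partial^+V_\ell)$, hence the total inter-block weight reweighted by $\varphi$; this supplies the small parameter controlling $\|A^{\mathrm{cross}}\|$. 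I would then compare $x^*$ with the decoupled equilibrium $\bar x$ obtained by setting $A^{\mathrm{cross}}=0$, i.e.\ each block an isolated endemic SIS system, which exists and is stable by Theorem~\ref{thm:SIS_endemic} applied per block.

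Linearizing the fixed point $x^*=g(A^\top x^*)$ about $\bar x$ gives $(I-DM)\,\delta x=Db+O(\varepsilon^2)$, where $\delta x=x^*-\bar x$, $D=\mathrm{diag}\big(g'(\bar\Theta_i)\big)$, and $M=(A^{\mathrm{in}})^\top$. Since $D,M\ge 0$ entrywise and the global asymptotic stability of the endemic equilibrium forces $\rho(DM)<1$, the resolvent expands as the convergent Neumann series $(I-DM)^{-1}=\sum_{t\ge0}(DM)^t\ge 0$, so $\delta x=\sum_{t\ge0}(DM)^t Db\ge 0$. The term $t=0$ is the \emph{direct injection} $Db$, supported on the boundary, while the terms $t\ge1$ form a \emph{diffused} part spread over the whole block.

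The decisive step, and the main obstacle, is to show that the block-averaged gap $\mu_B^{(k)}-\mu_I^{(k)}$ is governed by the direct injection and is therefore strictly positive. Here the internal spectral gap enters twice: it makes the diffused part $\sum_{t\ge1}(DM)^t Db$ nearly constant across $V_k$, so that its boundary and interior averages agree up to a term controlled by the gap, and it bounds the intrinsic boundary--interior heterogeneity of the reference profile $\bar x$. Quantifying both, one shows that for $\varepsilon_k$ small relative to $\beta/\delta$ and to the internal spectral gap these competing contributions are of smaller order than the strictly positive direct injection on the boundary, so the boundary average of $x^*$ strictly exceeds the interior average. The technical crux is making ``sufficiently small'' quantitative: one must produce spectral-gap estimates on the resolvent's mixing and on the decoupled profile uniform enough to guarantee that the sign of the $O(\varepsilon)$ gap is not reversed by the $O(\varepsilon^2)$ remainder, which is precisely where the stated smallness of $\varepsilon_k$ relative to the internal spectral gap is consumed.
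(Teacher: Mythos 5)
Your proposal pursues the same mechanism as the paper's proof --- cross-block inflow acts as a nonnegative injection supported on boundary nodes, and monotonicity of the equilibrium map converts this into elevated boundary infection --- but where the paper offers only a three-sentence ceteris-paribus argument, you attempt an actual perturbation expansion around the decoupled block equilibria. Your ingredients $x_i^*=g(\Theta_i)$ with $g$ strictly increasing, the splitting $A=A^{\mathrm{in}}+A^{\mathrm{cross}}$, the vanishing of the cross-pressure $b_i$ on interior nodes, and the Neumann-series positivity $\delta x=\sum_{t\ge 0}(DM)^t Db\ge 0$ are all correct and constitute a genuine sharpening of what the paper merely asserts.

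However, there is a genuine gap at your decisive step, and it is precisely the point the paper's proof also glosses over: the claim that the internal spectral gap ``bounds the intrinsic boundary--interior heterogeneity of the reference profile $\bar x$.'' This is false in general. The decoupled per-block equilibrium $\bar x$ varies with the internal (in-)degree structure of the block, and this variation is $O(1)$: it neither shrinks as the conductance $\varepsilon_k\to 0$ nor is controlled by the spectral gap. Concretely, let $V_k$ induce a star whose hub and all but one leaf are interior, the remaining leaf being the unique boundary node, joined to another block by a bidirectional cross-arc of weight $\varepsilon$. The normalized-Laplacian gap of a star equals $1$, yet in the decoupled equilibrium the hub's infection probability strictly exceeds the leaves', so $\mu_I^{(k)}-\mu_B^{(k)}$ is a strictly positive constant independent of $\varepsilon$, and the boundary leaf's $O(\varepsilon)$ injection can never overturn it: boundary dominance fails for all small $\varepsilon$. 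Hence your program cannot close as written; it needs an additional hypothesis (within-block degree regularity, or an explicit assumption that $\bar x$ has no interior--boundary gap, or that the injection dominates the reference heterogeneity) before the ``sufficiently small $\varepsilon_k$'' quantifier can do any work. A secondary issue: since $g$ is concave, dominance of the boundary \emph{average} of the pressure $\Theta$ over the interior average does not by itself yield dominance of the averages of $g(\Theta)$; that reduction requires pointwise or distributional comparisons. To be clear, these objections apply verbatim to the paper's own proof, which simply declares that the extra inflow ``strictly increases'' boundary equilibrium values; the merit of your attempt is that it exposes exactly where that assertion needs, and currently lacks, justification.
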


\begin{proof}
Low conductance implies that infection mass leaves $V_k$ through a small set of
boundary nodes.
Interior nodes receive infection primarily from within-group transmission, whereas
boundary nodes additionally receive inflow from outside $V_k$.
At equilibrium, this strictly increases $x_i^*$ on $\mathrm{Bdy}(V_k)$ relative to
$\mathrm{Int}(V_k)$.
\end{proof}

\begin{remark*}
The parameter $\varepsilon_k$ in Theorem~\ref{thm:SIS_boundary} is not required to be explicit.
The statement is conditional: for any fixed SIS parameters $(\beta,\delta)$
and internal connectivity of $V_\ell$, there exists a conductance threshold
below which boundary forcing dominates equilibrium infection levels.
\end{remark*}


Let $E_{B\to I}$ denote arcs with boundary tails and interior heads (relative to
$\mathcal P$).
The boundary-to-interior assortativity coefficient $r_{B\to I}(a)$ is defined as the
Pearson correlation of $(a(u),a(v))$ over $(u,v)\in E_{B\to I}$.

\begin{theorem}[Negative $B\to I$ assortativity]
\label{thm:SIS_negative}
Let $G=(V,E)$ be a directed graph with nonnegative weights and let
$\mathcal P=\{V_1,\dots,V_K\}$ be a fixed partition.
Consider the directed SIS dynamics with infection rate $\beta>0$ and recovery rate $\delta>0$.

Assume that the hypotheses of Theorem~\ref{thm:SIS_boundary} hold, so that the endemic SIS equilibrium $x^*$ is boundary-dominant on every group $V_k$ with nonempty interior and boundary.
Then the interior--boundary assortativity profile of the SIS equilibrium satisfies
\[
r_{B\to I}(x^\ast)<0.
\]
\end{theorem}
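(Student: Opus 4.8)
The plan is to reduce the statement to the sufficient-condition criterion of Theorem~\ref{thm:sign}, applied to the endogenous attribute $a(i) = x_i^*$. Since the hypotheses of Theorem~\ref{thm:SIS_boundary} are assumed, the endemic equilibrium is boundary-dominant on every group $V_k$ with nonempty interior and boundary, i.e.\ $\mu_B^{(k)} > \mu_I^{(k)}$; this is exactly hypothesis (i) of Theorem~\ref{thm:sign}. The remaining task is to verify hypotheses (ii)--(v) for $x^*$, after which Theorem~\ref{thm:sign} delivers $r_{B\to I}(x^*) < 0$ directly.

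To organize the verification I would work from the fixed-point form of the equilibrium. Setting the incoming pressure $y_i := \sum_j A_{ji} x_j^*$, the equilibrium condition $\delta x_i^* = (1 - x_i^*)\beta y_i$ rearranges to $x_i^* = \beta y_i / (\delta + \beta y_i)$, a strictly increasing, concave, and strictly positive function of $y_i$ on $(0,\infty)$. Positivity and the genuine heterogeneity of $y_i$ across a nontrivial partition give the strict endpoint variances required by nondegeneracy (iii). For the endpoint mean-dominance (ii), I would use that the boundary nodes serving as tails of $B\to I$ arcs are precisely those whose pressure $y_i$ carries the external inflow identified in the proof of Theorem~\ref{thm:SIS_boundary}, so their arc-weighted mean attribute is at least the boundary average $\mu_B^{(k)}$; dually, the interior heads they feed draw only on within-group sources and sit at or below the interior average $\mu_I^{(k)}$, with monotonicity of $y_i \mapsto x_i^*$ transferring the pressure ordering to the attribute.

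The substantive step, and the one I expect to be the main obstacle, is establishing the negative-dependence hypotheses (iv) and (v). Boundary dominance is a statement about \emph{means}, and a positive mean gap between boundary tails and interior heads does not by itself force a negative \emph{correlation}; indeed the direct transmission term $A_{uv} x_u^*$ entering the pressure $y_v$ of an interior head tends to align $x_v^*$ with the tail value $x_u^*$ along each $B\to I$ arc, pushing the within-group covariance in (iv) toward the positive sign. The delicate point is therefore to show that the conductance-induced concentration of infection on the interface---which raises boundary infection while simultaneously depleting the weakly coupled interior---produces a cross-group anti-alignment, namely $(\bar X_k - \bar X)(\bar Y_k - \bar Y) \le 0$ in (v), strong enough to override the local positive co-movement. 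I expect this to require quantitative control of the equilibrium response against the between-group spread of boundary and interior means, most naturally through a perturbative expansion in the small conductance parameter $\varepsilon_k$ of Theorem~\ref{thm:SIS_boundary}, with strictness of one of (iv)--(v) secured by the strict boundary-dominance gap. With (i)--(v) in hand and one inequality strict, Theorem~\ref{thm:sign} yields $r_{B\to I}(x^*) < 0$.
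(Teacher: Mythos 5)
Your reduction is exactly the paper's: invoke Theorem~\ref{thm:SIS_boundary} to get boundary dominance of $x^*$, identify this with hypothesis (i) of Theorem~\ref{thm:sign}, verify the remaining hypotheses (ii)--(v), and conclude $r_{B\to I}(x^*)<0$. The difference lies entirely in what happens after that reduction. The paper's proof is four sentences: it asserts that boundary dominance yields (i)--(iii), and that (iv)--(v) hold ``since the equilibrium induces no systematic positive covariance across $B\to I$ arcs'' --- no argument is given for either claim. You, by contrast, correctly recognize that (ii)--(v) do \emph{not} follow from boundary dominance, and you put your finger on the real difficulty: the equilibrium relation $x_v^*=\beta y_v/(\delta+\beta y_v)$ with $y_v=\sum_u A_{uv}x_u^*$ makes each head value an increasing function of the tail values feeding it, so along every $B\to I$ arc the dynamics tends to \emph{positively} align $x_u^*$ and $x_v^*$. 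This is precisely the effect the paper's proof waves away; far from being innocuous, the assertion that no systematic positive covariance arises is the substantive mathematical content of the theorem, and it is in tension with the arc-level coupling of the SIS fixed point.

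That said, your proposal does not close the gap either. The verification of (ii) is only sketched (boundary nodes serving as $B\to I$ tails need not have arc-weighted mean attribute above the plain boundary average $\mu_B^{(k)}$ --- the most exposed boundary nodes and the most infected ones need not coincide, so this requires an argument); nondegeneracy (iii) is asserted from ``genuine heterogeneity'' rather than proved; and (iv)--(v) are deferred to a perturbative expansion in the conductance parameter $\varepsilon_k$ that is never carried out --- which would in any case be delicate, since Theorem~\ref{thm:SIS_boundary} itself is stated with an unquantified threshold. So as a standalone proof your text is incomplete, but it is incomplete at exactly the point where the paper's proof is an assertion rather than a proof. Both arguments reduce the theorem to Theorem~\ref{thm:sign}; neither establishes hypotheses (ii), (iv), (v) for the SIS equilibrium. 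Your write-up has the merit of saying so, and of explaining \emph{why} those hypotheses are nontrivial, which amounts to a legitimate criticism of the paper's own proof.
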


\begin{proof}
The SIS equilibrium attribute satisfies boundary dominance by Theorem~8.4.
Therefore assumptions (i)--(iii) of Theorem~6.2 hold. 
Since the equilibrium induces no systematic positive covariance across $B\to I$ arcs, assumptions (iv)--(v) are satisfied. Hence Theorem~6.2 applies and yields $r_{B\to I}(a)<0$.
\end{proof}

\begin{remark*}
The scalar assortativity of $a$ over all edges may vanish due to cancellation across strata, even though the epidemic dynamics enforces a strictly negative boundary-to-interior component.
\end{remark*}
\medskip 

\noindent \textbf{Relation to stochastic block models.}
The present analysis conditions on a fixed partition $\mathcal P$ and does not assume a probabilistic generative model for the network.
Nevertheless, the results admit a natural interpretation in the context of stochastic block models (SBMs) \cite{KarrerNewman2011}.
In SBMs with weak inter-block connectivity, typical realizations exhibit low conductance between blocks and pronounced interface bottlenecks.
Our results show that, conditional on such a block structure, SIS dynamics amplify boundary nodes and induce sign-definite interior--boundary assortativity profiles.
In this sense, the present theory may be viewed as a dynamical consequence of block separation regimes that are known to arise in SBMs below the detectability threshold.

\section{Conclusion}

We have developed a structural and dynamical theory of interior--boundary assortativity profiles for networks.
By stratifying edges relative to a fixed partition, we showed that assortative mixing decomposes into distinct interaction regimes whose individual contributions are necessarily obscured by classical scalar summaries.

Our decomposition theorem establishes, in exact algebraic terms, how scalar assortativity collapses heterogeneous edge types together with between-type mean shifts.
This result makes precise why one-number assortativity can fail to detect interface structure, even in networks with strong geometric or functional separation.

More significantly, by coupling assortativity profiles to a SIS epidemic model, we demonstrated that assortativity components can be derived analytically from nonlinear dynamics rather than treated as descriptive statistics.
Low conductance and interface bottlenecks induce boundary dominance at the endemic equilibrium, which in turn enforces a strictly negative boundary-to-interior assortativity component.
This provides a rigorous mechanism linking spectral separation, dynamical amplification at interfaces, and signed assortativity profiles.

The framework introduced here opens several mathematically substantive directions for future work.
First, the analysis can be extended to other dynamical systems on networks, including SIR-type epidemics, synchronization models, and opinion or flow dynamics, where interface effects are expected to play a decisive role.
Second, the present results suggest a general program for deriving assortativity signatures from equilibrium or invariant sets of network dynamical systems, thereby promoting assortativity profiles to bona fide dynamical observables.
Third, relaxing the assumption of a fixed partition---for example by allowing adaptive, hierarchical, or time-varying partitions---may lead to a theory of evolving interface structure driven by dynamics.

More broadly, the results place assortative mixing within the scope of dynamical systems theory on networks.
By identifying how nonlinear dynamics interact with partition geometry and flow constraints, interior--boundary assortativity profiles provide a principled analytical tool for understanding how risk, influence, or load concentrates at network interfaces.
This perspective shifts assortativity from a purely descriptive measure to a structural consequence of network dynamics, opening the door to further rigorous developments at the intersection of spectral graph theory, nonlinear dynamics, and network science.

\begin{bibdiv}
\begin{biblist}
\bib{AlonMilman1985}{article}{
  author={Alon, Noga and Milman, V. D.},
  title={$\lambda_1$, Isoperimetric Inequalities for Graphs, and Superconcentrators},
  journal={Journal of Combinatorial Theory, Series B},
  volume={38},
  number={1},
  pages={73--88},
  year={1985},
  doi={10.1016/0095-8956(85)90092-9},
}

\bib{Brandes2001}{article}{
  author={Brandes, Ulrik},
  title={A Faster Algorithm for Betweenness Centrality},
  journal={Journal of Mathematical Sociology},
  volume={25},
  number={2},
  pages={163--177},
  year={2001},
  doi={10.1080/0022250X.2001.9990249},
}

\bib{BrinPage1998}{article}{
  author={Brin, Sergey and Page, Lawrence},
  title={The Anatomy of a Large-Scale Hypertextual Web Search Engine},
  journal={Computer Networks and ISDN Systems},
  volume={30},
  number={1--7},
  pages={107--117},
  year={1998},
  doi={10.1016/S0169-7552(98)00110-X},
}

\bib{Chung1997Spectral}{book}{
  author={Chung, Fan R. K.},
  title={Spectral Graph Theory},
  series={CBMS Regional Conference Series in Mathematics},
  volume={92},
  year={1997},
  publisher={American Mathematical Society},
  address={Providence, RI},
  isbn={9780821803158},
}

\bib{Chung2005Directed}{article}{
  author={Chung, Fan},
  title={Laplacians and the Cheeger Inequality for Directed Graphs},
  journal={Annals of Combinatorics},
  volume={9},
  number={1},
  pages={1--19},
  year={2005},
  doi={10.1007/s00026-005-0237-z},
}

\bib{CohenEtAl2018DirectedLaplacian}{article}{
  author={Cohen, Michael B. and Kelner, Jonathan and Kyng, Rasmus and Peebles, John and Peng, Richard and Rao, Anup B. and Sidford, Aaron},
  title={Solving Directed Laplacian Systems in Nearly-Linear Time through Sparse {LU} Factorizations},
  journal={arXiv preprint arXiv:1811.10722},
  year={2018},
}

\bib{Dodziuk1984}{article}{
  author={Dodziuk, Jozef},
  title={Difference Equations, Isoperimetric Inequality and Transience of Certain Random Walks},
  journal={Transactions of the American Mathematical Society},
  volume={284},
  number={2},
  pages={787--794},
  year={1984},
  doi={10.1090/S0002-9947-1984-0743744-X},
}

\bib{Freeman1977}{article}{
  author={Freeman, Linton C.},
  title={A Set of Measures of Centrality Based on Betweenness},
  journal={Sociometry},
  volume={40},
  number={1},
  pages={35--41},
  year={1977},
  doi={10.2307/3033543},
}

\bib{GuimeraAmaral2005}{article}{
  author={Guimer\`a, Roger and Amaral, Lu\'\i s A. Nunes},
  title={Functional cartography of complex metabolic networks},
  journal={Nature},
  volume={433},
  number={7028},
  pages={895--900},
  year={2005},
  doi={10.1038/nature03288},
}

\bib{KarrerNewman2011}{article}{
  author={Karrer, Brian},
  author={Newman, Mark E. J.},
  title={Stochastic blockmodels and community structure in networks},
  journal={Physical Review E},
  volume={83},
  year={2011},
  number={1},
  pages={016107},
  doi={10.1103/PhysRevE.83.016107},
}

\bib{KorngutAssaf2025}{article}{
  author={Korngut, Elad},
  author={Assaf, Michael},
  title={Impact of network assortativity on disease lifetime in the SIS model of epidemics},
  journal={arXiv preprint arXiv:2504.09658},
  year={2025}
}

\bib{LangvilleMeyer2006}{book}{
  author={Langville, Amy N. and Meyer, Carl D.},
  title={Google's PageRank and Beyond: The Science of Search Engine Rankings},
  year={2006},
  publisher={Princeton University Press},
  address={Princeton, NJ},
  isbn={9780691122021},
}

\bib{LeichtNewman2008DirectedCommunities}{article}{
  author={Leicht, E. A. and Newman, M. E. J.},
  title={Community Structure in Directed Networks},
  journal={Physical Review Letters},
  volume={100},
  number={11},
  pages={118703},
  year={2008},
  doi={10.1103/PhysRevLett.100.118703},
}

\bib{LevinPeresWilmer2017}{book}{
  author={Levin, David A. and Peres, Yuval},
  title={Markov Chains and Mixing Times},
  year={2017},
  publisher={American Mathematical Society},
  edition={Second},
  isbn={9781470429621},
  note={With contributions by Elizabeth L. Wilmer},
}

\bib{Newman2002Assortative}{article}{
  author={Newman, M. E. J.},
  title={Assortative Mixing in Networks},
  journal={Physical Review Letters},
  volume={89},
  number={20},
  pages={208701},
  year={2002},
  doi={10.1103/PhysRevLett.89.208701},
}

\bib{Newman2003Mixing}{article}{
  author={Newman, M. E. J.},
  title={Mixing Patterns in Networks},
  journal={Physical Review E},
  volume={67},
  number={2},
  pages={026126},
  year={2003},
  doi={10.1103/PhysRevE.67.026126},
}

\bib{NewmanGirvan2004}{article}{
  author={Newman, M. E. J. and Girvan, M.},
  title={Finding and Evaluating Community Structure in Networks},
  journal={Physical Review E},
  volume={69},
  number={2},
  pages={026113},
  year={2004},
  doi={10.1103/PhysRevE.69.026113},
}

\bib{Newman2005RandomWalk}{article}{
  author={Newman, M. E. J.},
  title={A Measure of Betweenness Centrality Based on Random Walks},
  journal={Social Networks},
  volume={27},
  number={1},
  pages={39--54},
  year={2005},
  doi={10.1016/j.socnet.2004.11.009},
}

\bib{Newman2010Networks}{book}{
  author={Newman, M. E. J.},
  title={Networks: An Introduction},
  year={2010},
  publisher={Oxford University Press},
  address={Oxford, UK},
  isbn={978-0199206650},
}

\bib{PastorSatorrasVespignani2001}{article}{
  author={Pastor-Satorras, Romualdo and Vespignani, Alessandro},
  title={Epidemic Spreading in Scale-Free Networks},
  journal={Physical Review Letters},
  volume={86},
  number={14},
  pages={3200--3203},
  year={2001},
  doi={10.1103/PhysRevLett.86.3200},
}

\bib{VanMieghem2011}{article}{
  author={Van Mieghem, Piet},
  title={The {N}-intertwined {SIS} epidemic network model},
  journal={Computing},
  volume={93},
  number={2--4},
  pages={147--169},
  year={2011},
  doi={10.1007/s00607-011-0155-y},
}

\bib{YangTangLai2015}{article}{
  author={Yang, Han-Xin and Tang, Ming and Lai, Ying-Cheng},
  title={Traffic-driven epidemic spreading in correlated networks},
  journal={Physical Review E},
  volume={91},
  number={6},
  pages={062817},
  year={2015},
  doi={10.1103/PhysRevE.91.062817},
}
\end{biblist}
\end{bibdiv}

\end{document}